\documentclass{article}
\usepackage{amsmath, amsthm, amsfonts,amssymb,latexsym}

\usepackage[T1]{fontenc}

\usepackage{enumitem}

\usepackage{biblatex}
\usepackage{hyperref}

\usepackage{appendix}

\newtheorem{lemma}{Lemma}
\newtheorem*{corollary}{Corollary}
\newtheorem{theorem}{Theorem}
\theoremstyle{definition}
\newtheorem*{remark}{Remark}
\newtheorem{example}{Example}

\theoremstyle{definition}
\newtheorem{definition}{Definition}[section]

\newcommand{\natnums}{\mathbb{N}}
\newcommand{\intnums}{\mathbb{Z}}
\newcommand{\ratnums}{\mathbb{Q}}
\newcommand{\reals}{\mathbb{R}}

\newcommand{\probP}{\mathbb{P}}
\newcommand{\expectE}{\mathbb{E}}

\newcommand{\adomain}{\mathbf{D}}

\newcommand{\Pois}{\operatorname{Pois}}
\newcommand{\Cauchy}{\operatorname{Cauchy}}
\newcommand{\kl}{\operatorname{kl}}
\newcommand{\KL}{\operatorname{KL}}

\newcommand{\Eta}{H}

\newcommand{\e}{\mathrm{e}}

\newcommand{\uut}{\mathbf{t}}

\usepackage{indentfirst}

\title{E-variables and tests of randomness for distribution classes}

\usepackage{authblk}
\author[1]{Georgii Potapov}
\author[1]{Yuri Kalnishkan}
\affil{Royal Holloway, University of London}

\date{February 2026}

\begin{document}

\maketitle

\begin{abstract}
    $E$-variables are a relatively new approach for testing statistical hypotheses that has been experiencing major development during the last several years.
    
    In this paper we introduce the method of $e$-variable-approximability and use it to develop a general approximation technique allowing us to construct $e$-variables for popular distribution classes important for applications.

    $E$-variables were originally based on a concept of Levin's (average-bounded) randomness tests from Algorithmic Information Theory. We show that our construction of $e$-variables can be used to provide an explicit construction for a randomness test with respect to a class of distributions.
\end{abstract}

\section{Introductions}

\subsection{History}

The problem of statistical hypothesis testing, that is, the problem of deciding whether given data can be described by a certain hypothesis, is naturally important in scientific research as well as applications.

Traditionally, hypothesis testing is done using $p$-values. Intuitively, a $p$-value is the probability under the null hypothesis that a chosen statistic on the given data is at least as extreme as its observed value. This defines a derivative statistic that we would call $p$-variable. Despite being widely used, $p$-values lack certain ``natural'' properties (for example, the value of a $p$-variable on a randomly chosen initial segment of the data, in general, is not a $p$-variable itself), making them easy to accidentally misuse.

\emph{$E$-variables} are statistical tests that can be viewed as a flexible and versatile alternative to $p$-variables for statistical hypothesis testing. Formally, an $e$-variable is a non-negative statistical test with expected value at most $1$ (\cite{vovk2020nonalgorithmic}).

Their use allows for a flexible experiment design that allows to test against the null for given alternative hypothesis and lacks many disadvantages of the commonly used $p$-values. $E$-variables are easy to combine, which simplifies meta-analysis, and running product of $e$-variables on independent bathces of data is a supermartingale, and thus continues to be an $e$-variable at random stopping time (\cite{VovkWang2021Calibration, GrunwaldSafeTesting}).

In this work we revisit the conceptual origin of $e$-variables --- \emph{randomness deficiency}. Randomness deficiency, introduced by Levin (\cite{levin1976uniform}, \cite{LEVIN198415}) is a concept from Algorithmic Information Theory, which can be loosely defined as an optimal test of randomness which exponent is an $e$-variable. Informally speaking, randomness deficiency is a function that counts number of regularities of its argument, a random object, and serves as a measure of how implausible an outcome of an experiment is (\cite{shen}).

The papers \cite{Kolmogorov1968LogicalBF} and \cite{vovk2016concept} give different but related definitions of a random \textit{``Bernoulli bit sequence'' }in terms of algorithmic randomness. The differences in these definitions correspond to differences between distributions of sequences with i.i.d. bits and exchangeable distributions, and can be quantified using randomness deficiency of number of $1$s in the sequence. 

One of the results of papers \cite{vovk2016concept} and \cite{VovkLearningBernoulli199796}, mentioned in our paper as Theorem \ref{thm:vovk-bern}, states that, in terms of randomness deficiency, it is approximately as plausible for an integer $k$ between $0$ and $n$ to be obtained from \emph{some} binomial distribution as it would be plausible for $k$ to be obtained from the uniform distribution on integer intervals $[s_j:s_{j+1}-1]\ni k$, where $\{s_1 < s_2 \ldots < s_m\}$ is some set that depends only on $n$.

Randomness tests relative to a class of distributions were more generally studied in \cite{Bienvenu2011AlgorithmicTA}, where their existence was proven from a general topological argument for the Cantor space and effectively closed classes of measures.

\subsection{Our contribution}

One of the goals of our research is to generalise the results of \cite{vovk2016concept} and \cite{Bienvenu2011AlgorithmicTA} to make them more applicable to practical machine learning.

Regarding algorithmic randomness, our main results are theorems \ref{thm:rand-def-of-a-family} and \ref{thm:rand-def-norm-cauchy}, which claim lower semi-computability of randomness tests for a class of distribution for several important distribution families, such as uniform distributions on intervals, Poisson distributions, normal (Gaussian) distributions and Cauchy distributions. Furthermore, our results explicitly express randomness with respect to a class in terms of universal uniform tests of randomness.

Compared to the similar statement in \cite{Bienvenu2011AlgorithmicTA}, we work in $\reals^n$ rather than on the Cantor set. Our results are proven for classes of distributions that do not constitute topologically closed sets in the space of probability distributions (for example, the sequence of uniform distributions on $[0:n]$ should converge to the measure which is zero on all numbers, which is not a uniform measure on a finite integer segment).

Theorems \ref{thm:rand-def-of-a-family} and \ref{thm:rand-def-norm-cauchy} follow from theorems \ref{thm:most-e-var-results} and \ref{thm:contin-e-var}, which can be seen as their non-algorithmic, purely statistical counterparts. We provide a uniform way to turn any set of $e$-variables for simple hypotheses $\{\probP_\theta\}$ into an $e$-variable for the composite hypothesis $\{\probP_\theta\}_{\theta\in\Theta}$. Our construction is not based on either reverse information projection or calibration of $p$-variables, and can use any $e$-variables for simple hypotheses as black boxes to produce a valid $e$-variable for the corresponding complex hypothesis.

Both our statistical and algorithmical results are built on the use of complexity reducing parameter estimation, which can be seen as another theoretical justification for minimum description length principle. 






\section{Preliminaries}

\subsection{E-variables}

$P$-values (and the corresponding statistics that we call $p$-variables) are a popular instrument for statistical hypothesis testing.

In the following definitions, $\mathcal{H}$ is a set of probability distributions (from \textbf{h}ypothesis).

\begin{definition}[$p$-variable]
    We say that a statistic (measurable function) $P$ is a \emph{p-variable}, if for every distribution $\probP\in\mathcal{H}$ we have
    \[
        \mathbb{P}(P(X) \leq p) \leq p
    \]
    for all $p \in [0,1]$.
\end{definition}

\begin{definition}[$e$-variable]
    We say that a statistic $E: \mathcal{X} \to [0,+\infty]$ is an \emph{$e$-variable} under hypothesis $\mathcal{H}$, if
    \[
        \expectE{E}_{\probP}(X) \leq 1.
    \]
    for every probability distribution $\probP\in\mathcal{H}$.
\end{definition}

While $p$-variables are more widely used in science, if $P$ is a $p$-variable and $E$ is an $e$-variable, then $1/E$ is a $p$-variable and $\kappa P^{\kappa-1}$ is an $e$-variable for all $\kappa\in(0,1)$. Conversion between $p$- and $e$-variables is discussed in details in \cite{VovkWang2021Calibration}.

In the case of a simple null hypothesis and a simple alternative hypothesis with probability densities $p_0$ and $p_1$, the ratio $p_1(x)/p_0(x)$ would be an $e$-variable for the null. This can be generalised for the case of complex hypotheses using reverse information projection (\cite{LardyReverseInfProj, GrunwaldSafeTesting}).

Theory of $e$-variables was developed as an adaptation of theory of uniform tests of randomness, defined in the following subsection, for the general case of measurable functions and eventually has become an its own field of research (\cite{vovk2020nonalgorithmic, RamdasWang2025}).

\subsection{Computable and semicomputable functions}

We use a standard definition of computable and lower (upper) semicomputable functions that follows \cite{GACS200591} and \cite{Bienvenu2011AlgorithmicTA}. This approach is in agreement with other works on the topic (such as \cite{VovkVyuginBayes}), and generally is associated with the Type-2 theory of computability as described in \cite{weihrauch2000computable}.

\begin{definition}\label{dfn:lsc-fun}
    A function $f : \adomain \to [-\infty,\infty]$ is \emph{lower semicomputable} (l.s.c.) if its undergraph $\{(x,t) : t < f(x) \}$ can be presented as a union $\bigcup_{(m,n)\in W} B_m \times [-\infty,r_n)$ with $W \subseteq \natnums\times\natnums$ an enumerable set and $(r_i)_i$ and $(B_i)_i$ being pre-specified enumerations of $\ratnums$ and of a countable topological base of $\adomain$, respectfully.

    We also say that $f$ is upper semicomputable (u.s.c.) if $-f$ is l.s.c., and $f$ is computable when it is both l.s.c. and u.s.c.
\end{definition}

A more general definition of computability (needed for theorems \ref{thm:rand-def-of-a-family} and \ref{thm:rand-def-norm-cauchy}) is as follows.

\begin{definition}
    In general, we say that a function $f: X \to Y$ is computable if, for some prespecified enumerations of countable bases $\mathbf{B}^X=\left(B_1^X,B_2^X,\ldots\right)$ and $\mathbf{B}^Y=\left(B_1^Y,B_2^Y,\ldots\right)$ of $X$ and $Y$ respectfully we have
\[
    \bigcup_{k} f^{-1}\left(B_k^Y\right) \times B_k^Y = \bigcup_{(m,n)\in W} B_m^X \times B_n^Y
\]
for some enumerable set $W$.
\end{definition}

\begin{remark}
    Note that lower semicomputability implies lower semicontinuity, upper semicomputability implies upper semicontinuity and computability implies continuity.
\end{remark}

In this paper, we consider (semi)computability of functions defined either on $\reals \times \mathfrak{P}(\reals)\times\{0,1\}^*$ or $\natnums \times \mathfrak{P}(\natnums)\times\{0,1\}^*$. Slightly informally, we can think that function $f$ is lower semicomputable on such domain if there is an oracle Turing machine that approximates $f(u)$ from below and accesses ``infinite'' components of its input $u=(x,\probP,s)$ through oracles. 

Oracle for a real number or a distribution, on a rational input $\varepsilon >0$, returns an $\varepsilon$-appoximation of the object.

An $\varepsilon$-approximation of a probability measure $\probP$ would be a measure $\probP_\varepsilon=\sum_{i=1}^n p_i \delta_{x_i}$ concentrated on a finite number of rational points $x_i$ and assigning them rational probabilities $p_i$, such that $\probP_{\varepsilon}$ is $\varepsilon$-close to $\probP$ in L\'evy–Prokhorov metric, that is, for all measurable sets $A$ we have
\[
    \probP(A) < \probP_\varepsilon(A^\varepsilon) + \varepsilon,~~~ \probP(A^\varepsilon) > \probP_\varepsilon(A) - \varepsilon,
\]
where $A^\varepsilon$ is the $\varepsilon$-neighbourhood of $A$.




\subsection{Randomness tests}

\begin{definition}[Randomness test]\label{dfn:rand-test}

For a measurable space $\mathcal{X}$, consider a function $f(\cdot | \cdot, \cdot):\mathcal{X}\times\mathfrak{P}(\mathcal{X})\times\mathcal{Y}\to[0,\infty]$, which takes as its arguments an object $x \in \mathcal{X}$, a probability distribution $\probP$ on $\mathcal{X}$ and an auxiliary object in  $y \in \mathcal{Y}$. We say that $f$ is a \emph{uniform randomness test} (or a test of randomness in a sense of Levin), if it is:
\begin{enumerate}
    \item a lower semicomputable function (assuming an appropriate computable structure on $\mathcal{X}, \mathfrak{P}(\mathcal{X})$ and $\mathcal{Y}$),
    \item for all probability measures $\probP$ and all strings $y$ the function $f(\cdot\mid \probP, y)$ is an $e$-variable for the simple hypothesis $\{\probP\}$.
\end{enumerate}
\end{definition}

From a practical point of view, lower semi-computability is a very mild restriction that allows us to construct a universal test of randomness (described below). Philosophically, lower semi-computability represents the idea that, if an $e$-variable rejects the null with a chosen significance level, than this must be eventually established from the data measured with high enough but finite precision.

The next theorem-definition is a well known result.

\begin{theorem}[existence and definition of a universal uniform test (\cite{levin1976uniform, GACS200591, VovkVyuginBayes})]
For $\mathcal{X}$ that is of the form $\{0,1\}^*, \{0,1\}^\omega, \reals^n$ and $\mathcal{Y}$ that is either one of these sets or the set of bounded measures on such sets, there exists a uniform randomness test $\uut$ such that for any uniform randomness test $f$ there is a constant $C_f >0$, depending only on $\uut$ and $f$, for which the inequality
\[
    \uut(x\mid \probP, y) \ge C_f f(x\mid \probP, y)
\]
holds for any $x, \probP, y$.

Any such test is called a \emph{universal uniform randomness test}, or simply a \emph{universal test}.
\end{theorem}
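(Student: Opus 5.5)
The plan is the standard Levin--Gács construction: enumerate all lower semicomputable candidates, force each one to satisfy the $e$-variable condition by a computable ``trimming'' operation, and take a weighted sum. First fix a computable enumeration $g_1, g_2, \ldots$ of all lower semicomputable functions $\mathcal{X}\times\mathfrak{P}(\mathcal{X})\times\mathcal{Y}\to[0,\infty]$. This exists because, by Definition~\ref{dfn:lsc-fun}, each such function is determined by an enumerable set $W\subseteq\natnums\times\natnums$ of basic undergraph rectangles, and enumerable sets are indexed by Turing machines. Concretely, $g_k$ is the supremum of an increasing computable sequence $g_k^{(s)}$ of \emph{basic} functions: finite maxima of functions of the form $r\cdot\mathbf{1}_{B}$, where $r$ is rational and $B$ is a basic open set of the product space (a rational box in $\reals^n$ or a cylinder, times a basic ball in the Prokhorov metric, times a basic set of $\mathcal{Y}$).

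Next, define a trimming operator $g\mapsto \hat g$ with three properties: $\hat g$ is lower semicomputable uniformly in the index of $g$; $\hat g(\cdot\mid\probP,y)$ is an $e$-variable for every $\probP$ and $y$; and $\hat g=g$ whenever $g$ already is a uniform test. The construction is as follows. At stage $s$ we have $g^{(s)}$, and we keep it (in the limit) only on the region of $(\probP,y)$ where $\int g^{(s)}(x\mid\probP,y)\,d\probP(x)<1$ can be certified. The key fact is that, for a basic lower semicontinuous step function $h$, the map $\probP\mapsto\int h\,d\probP$ is lower semicomputable, since $\probP(B)$ is l.s.c. in $\probP$ for open $B$. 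Hence the strict inequality ``$<1$'' is not directly semidecidable. To handle this, replace basic functions by basic \emph{continuous} (Lipschitz, piecewise-linear) approximants $h$ with $h\le g$. For these, $\probP\mapsto\int h\,d\probP$ is computable on the space of measures with the Prokhorov oracle (this is why the paper fixes the oracle for $\probP$ to be a Prokhorov approximation). Then define
\[
\hat g(x\mid\probP,y)=\sup\Bigl\{h_s(x\mid\probP,y): s\ \text{such that}\ \int h_s(\cdot\mid\probP',y)\,d\probP' < 1 \text{ is certified on a neighbourhood of }(\probP,y)\Bigr\},
\]
where $h_s$ is an increasing sequence of continuous approximants with $\sup_s h_s=g$. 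Monotonicity of the $h_s$ gives $\int\hat g\,d\probP\le 1$ by monotone convergence. If $g$ is a test, every stage has integral at most $1$. Boundary equality can be handled by first scaling, $g\mapsto g/2$, which only costs a constant.

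Finally set $\uut=\sum_k 2^{-k}\hat g_k$. This is lower semicomputable as a computable sum of uniformly l.s.c. functions, and $\int\uut\,d\probP\le\sum_k 2^{-k}=1$. Given any test $f$, we have $f/2=g_k$ for some $k$, so $\uut\ge 2^{-k}\hat g_k=2^{-k-1}f$, which gives $C_f=2^{-k-1}$. For $\mathcal{Y}$ equal to a space of bounded measures, the same argument works with that space's own computable metric structure.

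The main obstacle is the trimming step: proving that $\hat g$ is genuinely lower semicomputable in $(x,\probP,y)$ while $\int\hat g\,d\probP\le1$ holds for \emph{every} $\probP$ and not merely computable ones. This needs the computability (not just semicomputability) of $\probP\mapsto\int h\,d\probP$ for bounded Lipschitz $h$ under the Prokhorov metric, and on $\reals^n$ some care with non-compactness. For the latter I would first try truncating $h$ to have compact support, using tightness supplied by the $\varepsilon$-approximations.
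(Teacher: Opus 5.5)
Your proposal is correct and follows essentially the same route the paper relies on. The paper does not prove this theorem itself but cites it, and your trimming operator $g\mapsto\hat g$ is exactly the trimming theorem restated in the appendix (Theorem 4.1.1 of G\'acs's lecture notes, with bound $b\equiv 1$), followed by the weighted sum $\sum_k 2^{-k}\hat g_k$ as you describe. One simplification: non-compactness of $\reals^n$ needs no truncation or tightness argument, because for bounded Lipschitz $h$ one has $\bigl|\int h\,d\probP-\int h\,d\probP'\bigr|\le 2(\|h\|_\infty+\|h\|_{L})\,\rho(\probP,\probP')$ with $\rho$ the L\'evy--Prokhorov distance (Dudley), and this gives the required computability of the integral directly.
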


From now on we are going to use symbols $=^+, \le^+$, $\ge^+$, $=^\times$, $\le^\times$, $\ge^\times$ to denote that an (in)equality holds up to an additive or a multiplicative constant.

For any universal tests $\uut_1$ and $\uut_2$ we have $\uut_1=^\times \uut_2$, so for the rest of the paper we fix an arbitrary universal test and refer to it as $\uut$.

The current paper is mainly concerned with studying the following object.

\begin{definition}
    For a family of probability distributions $\mathcal{P}$, we define \emph{randomness test relative to $\mathcal{P}$} as the \emph{$\inf$-projection}
    \[
        \uut^{\mathcal{P}}(x\mid y) = \inf_{\probP\in\mathcal{P}}\uut(x\mid \probP, y).
    \]
\end{definition}


In this paper we are concerned with finding explicit expressions for $\inf$-projections, thus directly proving that randomness test for a class is a l.s.c. function.

We will show this for many important families of distributions $\mathcal{P}$, listed below.

Our technique generalises and extends the approach of \cite{VovkLearningBernoulli199796, vovk2016concept}, where the following theorem is proved.

\begin{theorem}[\cite{VovkLearningBernoulli199796, vovk2016concept}]\label{thm:vovk-bern}
    For any natural number $n$ let $\operatorname{Bern}_n=\{\operatorname{Bern}(n,p)\}_{p\in[0,1]}$ be the family of distributions on $\{0,1\}^n$ with i.i.d. bits, and $\operatorname{Bin}_n = \{\operatorname{Bin}(n,p)\}_{p\in [0,1]}$ be the family of binomial distributions on $[0:n]=\{0,1,\ldots,n\}$. Then for any string $x \in \{0,1\}^n$ and any number $k \in [0:n]$
    \begin{align*}
        \uut^{\operatorname{Bern}_n}(x) &=^\times \uut(x\mid \operatorname{Bern}(n,\hat p(|x|_1/n))),\\
        \uut^{\operatorname{Bin}_n}(k) &=^\times 
        \uut(k\mid \operatorname{Bern}(n,\hat p(k/n)))\\
        &=^\times
        \uut\left(k \mid \ \mathcal{U}\left(\hat p^{-1} (\hat p(k/n)\right)\right),
    \end{align*}
    where $\hat p$ is a computable function that maps a rational number of the form $m/n$, ${m \in [0:1]}$, to the nearest element of the ``net'' ${\left\{ \sin^2 \left( \frac{\pi t}{2\lfloor \sqrt{n}\rfloor}\right) \right\}_{t\in [1:\lfloor\sqrt{n}\rfloor]}}$.
\end{theorem}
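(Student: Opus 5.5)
We prove the three equalities by sandwiching, in the spirit of Vovk's argument. The easy half is the upper bound. Since $\uut^{\operatorname{Bern}_n}$ and $\uut^{\operatorname{Bin}_n}$ are infima over the whole family, and $\hat p(|x|_1/n)$ is a parameter of the family, we immediately get
\[
\uut^{\operatorname{Bern}_n}(x)\le \uut(x\mid \operatorname{Bern}(n,\hat p(|x|_1/n))),
\]
and similarly for $\operatorname{Bin}_n$. Here $\operatorname{Bern}(n,\hat p)$ in the binomial line is read as the binomial law. All remaining work is the reverse inequality. For it we build, out of the right-hand sides, new uniform randomness tests $f(\cdot\mid\probP,n)$ that dominate the right-hand side whenever $\probP=\operatorname{Bern}(n,p)$ for an arbitrary $p$. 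Universality of $\uut$ then gives
\[
\uut(x\mid \operatorname{Bern}(n,p))\ge^\times \uut(x\mid\operatorname{Bern}(n,\hat p)),
\]
and taking the infimum over $p$ finishes that direction.

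The key combinatorial input is the variance-stabilising $\sin^2$ net. For the true parameter $p$, write $\theta=\arcsin\sqrt p$ and $t_p$ for the index of the nearest net point. Under $\operatorname{Bin}(n,p)$, the quantity $\arcsin\sqrt{k/n}$ concentrates around $\theta$ with spread of order $1/\sqrt n$, which is comparable to the net spacing $\pi/(2\lfloor\sqrt n\rfloor)$. Hence the index $\hat t=\hat p^{-1}$-cell of $k$ satisfies a tail bound $\probP_p(|\hat t-t_p|\ge j)\le C e^{-cj^2}$ uniformly in $n$ and $p$. This follows from Hoeffding or Bernstein inequalities applied after the arcsine transform, handling the boundary cells near $0$ and $1$ with Chernoff/KL bounds. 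I also need a likelihood comparison. For $k$ in cell $t$, the binomial probabilities under $\operatorname{Bin}(n,p)$ and under $\operatorname{Bin}(n,\hat p_t)$ differ by a factor of at most $e^{O(j^2)}$ when $|t-t_p|=j$. A cell contains about $\sqrt n$ values of $k$ (up to the boundary effect), and on it $\operatorname{Bin}(n,\hat p_t)$ is within a constant factor of uniform. This gives the last equality,
\[
\uut(k\mid\operatorname{Bin}(n,\hat p_t))=^\times \uut(k\mid \mathcal U(\text{cell}_t)),
\]
by the standard fact that $\uut(\cdot\mid Q)\ge^\times c\,\uut(\cdot\mid Q')$ when $Q\le cQ'$ on the relevant support and both measures are computable from $(n,t)$.

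Next we define the dominating test. Let $g_t(k)=\uut(k\mid\operatorname{Bin}(n,\hat p_t))$, and set
\[
f(k\mid\probP_p,n)=\sum_{j}w_j\sum_{t:|t-t_p|=j}\mathbf 1[k\in\text{cell}_t]\,g_t(k)\,\frac{\operatorname{Bin}(n,\hat p_t)(k)}{\operatorname{Bin}(n,p)(k)}\,e^{-Cj^2}\,\phi(j),
\]
with $\sum_j w_j\phi(j)\le1$. The expectation of each term under $\operatorname{Bin}(n,p)$ is at most $\expectE_{\hat p_t}g_t\le1$. The weights are chosen so that the likelihood-ratio factor, which is at least $e^{-O(j^2)}$, is compensated. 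This needs a sharper approach than worst-case bounds: we cap $f$ by only keeping $k$ whose ratio is $\ge e^{-Cj^2}$, and this costs only a constant. Lower semicomputability in $\probP$ is a subtle point, because $t_p$ is not computable from $p$. We avoid it by summing over all $t$ with weight $\exp(-c\,|\arcsin\sqrt p-\arcsin\sqrt{\hat p_t}|^2 n)$, which is continuous and computable in $p$. The Bernoulli-sequence statement reduces to the binomial one because $\operatorname{Bern}(n,p)$ conditioned on $|x|_1=k$ is uniform on strings with $k$ ones, independent of $p$. So $\uut(x\mid\operatorname{Bern}(n,p))=^\times$ the combination of the test for $k$ and the uniform test within the layer, via the standard conditional decomposition of $\uut$.

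The main obstacle is the uniform, $n$- and $p$-independent control of the likelihood ratios near the endpoints $p\approx0,1$, where the arcsine approximation degenerates. I would handle it by treating the first $O(1)$ cells separately using Poisson approximation.
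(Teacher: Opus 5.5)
\textbf{Context.} The paper does not prove this theorem itself; it quotes it from Vovk. The route its own framework offers is your top-level sandwich. The reverse inequality comes from showing that $k\mapsto\uut(k\mid\operatorname{Bin}(n,\hat p(k/n)))$ is, up to a constant factor, an $e$-variable under every $\operatorname{Bin}(n,p)$ (Lemmas \ref{lem:4-cond-techn-lemma}--\ref{lem:exp-fams-are-good}). That function is then turned into a uniform test by Gács-style trimming (Lemma \ref{lem:semicomp-test-discrete}, Theorem \ref{thm:rand-def-of-a-family}).

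\textbf{The main gap.} Your proposal has a genuine gap at exactly the estimate that makes this work. Write $K_t=\{k:\hat p(k/n)=\hat p_t\}$ and $D(q\|p)=q\log\frac qp+(1-q)\log\frac{1-q}{1-p}$. Since $g_t$ is an arbitrary $e$-variable for $\operatorname{Bin}(n,\hat p_t)$, it may put all its mass on a single point of $K_t$. So your tail bound on $\probP_p(|\hat t-t_p|\ge j)$ is beside the point. What is needed is a pointwise, one-sided bound, uniform in $n$ and $p$:
\[
\sup_{k\in K_t}\frac{\operatorname{Bin}(n,p)(k)}{\operatorname{Bin}(n,\hat p_t)(k)}
=\sup_{k\in K_t}\exp\bigl(nD(k/n\|\hat p_t)-nD(k/n\|p)\bigr)\le a_{|t-t_p|},
\qquad \sum_j a_j<\infty .
\]

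You never establish this. The bound you do state (a factor $e^{O(j^2)}$ either way) points the wrong way for your own construction. At $k\in K_t$ your $f$ equals $w_j\phi(j)e^{-Cj^2}\frac{\operatorname{Bin}(n,\hat p_t)(k)}{\operatorname{Bin}(n,p)(k)}\,g_t(k)$. If the ratio could be as small as $e^{-O(j^2)}$, this prefactor would tend to $0$ with $j$, and $f\ge^\times g_{\hat t(k)}(k)$ would fail. ``Keeping only the $k$ whose ratio is $\ge e^{-Cj^2}$'' sets $f=0$ at the other $k$, which is not a constant-factor loss.

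The correct input has two parts:
\begin{itemize}
\item $nD(k/n\|\hat p(k/n))\le c'$: the rounded MLE loses only a constant in likelihood. This follows from $D\le\chi^2$ and the $\sin^2$ spacing, and it is where the net really matters.
\item $nD(k/n\|p)\ge n(\arcsin\sqrt{k/n}-\arcsin\sqrt p)^2$. This follows from $D\ge-2\log$ of the Bhattacharyya coefficient $\cos(\arcsin\sqrt q-\arcsin\sqrt p)$. It grows quadratically in $j$ because the net has spacing $\pi/(2\lfloor\sqrt n\rfloor)$ on the arcsine scale.
\end{itemize}
Together these are conditions \ref{prop:loglike-to-d}--\ref{prop:g-near-s}, \ref{prop:obtuse-pyth}, \ref{prop:fixed-steps} with $d=nD$.

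With this in hand your construction can be repaired, but the change-of-measure factor, the weights and the soft version of $t_p$ all become unnecessary. The bound $\expectE_p\,g_{\hat t(X)}(X)\le\sum_t\sup_{k\in K_t}\frac{\operatorname{Bin}(n,p)(k)}{\operatorname{Bin}(n,\hat p_t)(k)}\le\sum_t a_{|t-t_p|}$ holds directly. The resulting test depends on $\probP$ only through trimming, so computing $t_p$ never arises. Conversely, the trimming step, which you omit, is needed. A uniform test must be an $e$-variable under every measure, and an $f$ with $1/\operatorname{Bin}(n,p)(k)$ built in is uncontrolled off the binomial family.

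\textbf{The sequence case.} Reducing it to the count case by a ``standard conditional decomposition of $\uut$'' invokes something that is not standard: decompositions of that kind are nontrivial results in their own right. It is also not needed. For $k=|x|_1$ we have $\operatorname{Bern}(n,p)(x)/\operatorname{Bern}(n,\hat p)(x)=\operatorname{Bin}(n,p)(k)/\operatorname{Bin}(n,\hat p)(k)$. So the same argument runs verbatim on $\{0,1\}^n$ with estimator $\hat p(|x|_1/n)$.

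\textbf{The top of the net.} No boundary treatment can work there as the net is literally stated. The index $t=\lfloor\sqrt n\rfloor$ gives the net point $1$, and $k=n-1$ is rounded to it for every $n\ge5$. Since $\operatorname{Bin}(n,1)(\{n-1\})=0$, the likelihood ratio above is infinite. Indeed $\uut(n-1\mid\operatorname{Bin}(n,1))=\infty$ (compare with the uniform test $2^{-k-1}/\probP(\{k\})$). By contrast, $\uut^{\operatorname{Bin}_n}(n-1)\le 1/\operatorname{Bin}(n,1-1/n)(\{n-1\})\le\e$. The same cell also breaks your claim that the binomial is within a constant factor of uniform on the cell. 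The net has to stop at $t\le\lfloor\sqrt n\rfloor-1$, mirroring the exclusion of $t=0$. With that change, the endpoint cells are covered by the same $D$-estimates, and no Poisson approximation is required.
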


We are going to prove similar results for the following families of distributions: 
\begin{itemize}
    \item Uniform distribution on $[0:n]=\{0, 1, \ldots, n\}$, $\{\mathcal{U}([0:n])\}_{n\in\natnums}$,
    \item Poisson distributions, $\{\Pois(\lambda)\}_{\lambda>0}$,
    \item uniform distributions on real intervals, $\{\mathcal{U}([0,\theta])\}_{\theta > 0}$,
    \item normal distributions with fixed variance $\{\mathcal{N}(\mu,1)\}_{\mu\in\reals}$,
    \item normal distributions with fixed mean value $\{\mathcal{N}(0,\sigma^2)\}_{\sigma^2>0}$,
    \item Cauchy distributions with fixed scale $\{\Cauchy(x_0,1)\}_{x_0\in\reals}$.
\end{itemize}

Our methods differ significantly from those in \cite{VovkLearningBernoulli199796, vovk2016concept}. We are going to prove our results (formulated precisely in section \ref{sec:rand-def-res}) by showing that all the distribution families in question are what we call $e$-variable-approximable, which by itself mean that they admit a certain construction of $e$-variable for the family from $e$-variables for individual distributions.  

\subsection{Organisation}

In section \ref{sec:e-var-approx}, we are going to formulate a number of technical results on $e$-variables that do not include any computability requirements: subsection \ref{sec:e-var-approx} states the definition of $e$-variable-approximability and lists the main results about this notion, subssection \ref{sec:interp-appr} describes continuously interpolated versions of $e$-variables constructed in subsection \ref{sec:e-var-approx}, which we would need for making our functions l.s.c. in section \ref{sec:rand-def-res}. 

In section \ref{sec:rand-def-res}, we will use these results for $e$-variables to construct l.s.c. approximations of $\inf$-projections for the claimed families of distributions. Most of the proofs are deferred to the appendix. 

In section \ref{sec:discs} we discuss our choice of definitions, scale of applicability of our results and perspectives for further research.

All non-standard notation is listed in the appendix.

\section{$E$-variable-approximability}\label{sec:e-var-approx}

\subsection{Main results on $e$-variable-approximability}

Here we introduce one of the main concepts in our work, the notion of $e$-variable approximability of a distribution family. 

\begin{definition}\label{dfn:e-val-approx}
    We say that a family $\mathcal{H}$ of distributions is \emph{$e$-variable approximable (from the net $\mathcal{S}$) (with factor $C$)}, if there exists an at most countable set $\mathcal{S}$, called \emph{a net}, a set of distributions $\{\probP_s\}_{s\in \mathcal{S}} \subset \mathcal{H}$, a function $\hat s : \reals \to \mathcal{S}$ that we would call \emph{an estimator} and a constant $C > 0$ such that for any family of non-negative statistical tests $\{e_s\}_{s\in\mathcal{S}}$, if for all $s$ the test $e_s$ is an $e$-variable under the simple hypothesis $\{ \probP_{s}\}$, then the test $e(x) = \frac{1}{C} e_{\hat s(x)}(x)$ is an $e$-variable under the hypothesis $\mathcal{H}$.
\end{definition}

This definition might seem somewhat artificial at first, and so we provide some argumentation and examples in Section \ref{sec:discs}.

The following lemmas provide easily verifiable conditions for establishing $e$-variable approximability. 

\begin{lemma}\label{lem:4-cond-techn-lemma}
    Let $\mathcal{H} = \{\probP_\theta\}_{\theta\in\Theta}$ be a family of distributions on $\reals^n$ with densities $\{p_\theta\}_{\theta\in\Theta}$, parametrised by the elements of $\Theta\subseteq\reals$, and let $\mathcal{S} \subseteq \Theta$ be an at most countable set such that $\operatorname{succ}_{\mathcal{S}}(\theta)=\min\{s\in\mathcal{S}\mid\theta<s\}$ and $\operatorname{pred}_{\mathcal{S}}(\theta)=\max\{s\in\mathcal{S}\mid s<\theta\}$ are well-defined for all $\theta\in\Theta$ expect for possibly those that are greater than maximal or less than minimal element of $\mathcal{S}$. Let $d(\cdot\|\cdot):\Theta\times\Theta\to [0, +\infty]$ be a well-defined function. Let $\hat s:\reals^n\to \mathcal{S}$ be a statistic and $g:\reals^n\to\Theta$ be a function.
    Consider now the following properties:
    \begin{enumerate}[label=(p \roman*)]
        \item\label{prop:loglike-to-d} For every $\theta \in\Theta$, $s \in \mathcal{S}$ and $x\in \hat{s}^{-1}(s) \cap \{y\mid p_\theta(y)>0\}$ we have that
        \[
            \log\frac{p_\theta(x)}{p_{s}(x)} = d(g(x)\|s)-d(g(x)\|\theta);
        \]
        \item\label{prop:near-bound} there is a constant $c' > 0$ such that for any $s \in \mathcal{S}$ and any $x$ for which $x \in \hat s^{-1}(s)$ we have
        \[
            d(g(x)\|s) < c'.
        \]
        \item\label{prop:g-near-s} For all $s\in\mathcal{S}$ and all $x \in \hat s^{-1}(s)$
        \[
            \operatorname{pred}_{\mathcal{S}}(s)\le \operatorname{pred}_{\mathcal{S}}(g(x))\le \operatorname{succ}_{\mathcal{S}}(g(x))\le \operatorname{succ}_{\mathcal{S}}(s).
        \]
        \item\label{prop:d-grows-fast-enough} There is a constant $\alpha >0$ such that for all $\theta_1, \theta_2 \in \Theta, \theta_1 < \theta_)$, it holds that
        \[
            d(\theta_1\|\theta_2) \ge (1+\alpha)\log (k-1), ~~~ d(\theta_2\|\theta_1) \ge (1+\alpha)\log (k-1),
        \]
        for $k = \left| \mathcal{S} \cap (\theta_1, \theta_2) \right|$\footnote{for this statement, we treat the logarithm of non-positive numbers as $-\infty$}.
    \end{enumerate}

    If all the properties \ref{prop:loglike-to-d}--\ref{prop:d-grows-fast-enough} of the data $(\mathcal{H}, \mathcal{S}, \Theta, \hat{s}, g, d)$ hold, then $\mathcal{H}$ is $e$-variable approximable from the net $\mathcal{S}$ and the estimator $\hat s$, and the constant factor $C$ can be taken as $C =\exp(c')\left(7 +  2/\alpha\right)$.
\end{lemma}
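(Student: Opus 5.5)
The idea is to fix an arbitrary $\theta\in\Theta$ and an arbitrary family $\{e_s\}_{s\in\mathcal{S}}$ with $\expectE_{\probP_s} e_s\le 1$, and to bound $\expectE_{\probP_\theta} e_{\hat s(X)}(X)$ by splitting the integral according to the value of $\hat s$:
\[
    \expectE_{\probP_\theta} e_{\hat s(X)}(X)=\sum_{s\in\mathcal{S}}\int_{\hat s^{-1}(s)} e_s(x)\,\frac{p_\theta(x)}{p_s(x)}\,p_s(x)\,dx .
\]
On $\hat s^{-1}(s)\cap\{p_\theta>0\}$, property \ref{prop:loglike-to-d} turns the likelihood ratio into $\exp\bigl(d(g(x)\|s)-d(g(x)\|\theta)\bigr)$. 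Property \ref{prop:near-bound} bounds the first factor by $\exp(c')$. So the $s$-th summand is at most $\exp(c')\sup_{x\in\hat s^{-1}(s)}\exp(-d(g(x)\|\theta))$, because $\int e_s p_s\le 1$. It then suffices to show that $\sum_{s}\sup_{x\in\hat s^{-1}(s)} e^{-d(g(x)\|\theta)}\le 7+2/\alpha$, uniformly in $\theta$.

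To bound this sum, I will order the net points relative to $\theta$. Let $s_0=\operatorname{pred}_{\mathcal{S}}(\theta)$ (or $\theta$ itself if it lies in $\mathcal{S}$). Index the net points above $\theta$ as $s_1<s_2<\dots$ and those below as $s_{-1}>s_{-2}>\dots$. A bounded number of summands near $\theta$ are estimated trivially by $1$, using $d\ge 0$; I expect this to contribute the constant $7$, from a few neighbours on each side together with the boundary and off-net cases. For a far point $s=s_j$ with $j\ge 3$, say, take $x\in\hat s^{-1}(s)$. By property \ref{prop:g-near-s}, $g(x)$ lies between $\operatorname{pred}_{\mathcal{S}}(s)$ and $\operatorname{succ}_{\mathcal{S}}(s)$, so the open interval between $\theta$ and $g(x)$ contains at least about $j-2$ net points. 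Property \ref{prop:d-grows-fast-enough}, applied to the pair $(\theta,g(x))$ in whichever order makes $\theta_1<\theta_2$ (it gives the bound in both directions), yields $d(g(x)\|\theta)\ge(1+\alpha)\log(j-3)$. The summand is therefore at most $(j-3)^{-(1+\alpha)}$.

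Summing over both sides gives roughly $2\sum_{k\ge1}k^{-(1+\alpha)}\le 2(1+1/\alpha)$. Combined with the near terms, this gives the total $7+2/\alpha$ after careful bookkeeping of the index shifts.

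The main obstacle is exactly this bookkeeping. I have to make sure that property \ref{prop:g-near-s} really forces enough net points strictly between $\theta$ and $g(x)$; this needs the sandwich $\operatorname{pred}(s)\le\operatorname{pred}(g(x))$ to place $g(x)$ at least one step beyond $\operatorname{pred}(s)$. I also have to handle the edge cases where $\operatorname{succ}$ or $\operatorname{pred}$ is undefined because $\theta$ or $g(x)$ lies outside the range of $\mathcal{S}$, and the case where $\theta$ itself belongs to $\mathcal{S}$. Finally, I have to count how many indices must be treated trivially so that the constant comes out as $7$. The set $\{p_\theta=0\}$ contributes nothing, so it causes no trouble.
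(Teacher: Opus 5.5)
Your route is the paper's: split $\expectE_\theta e_{\hat s(X)}(X)$ over the cells $\hat s^{-1}(s)$, and use \ref{prop:loglike-to-d} and \ref{prop:near-bound} to bound the $s$-th term by $\e^{c'}\sup_{x\in\hat s^{-1}(s)}\e^{-d(g(x)\|\theta)}$. A few terms near $\theta$ are then bounded by $\e^{c'}$ using $d\ge0$, and the two tails are controlled with \ref{prop:g-near-s}, \ref{prop:d-grows-fast-enough} and a $p$-series. The gap is in the bookkeeping you postponed: as written, it does not give the stated constant. Your count is at least $j-2$ net points between $\theta$ and $g(x)$, hence $d(g(x)\|\theta)\ge(1+\alpha)\log(j-3)$. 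With that bound, up to seven terms ($|j|\le 3$) get only the trivial bound, and the total is $\e^{c'}\bigl(7+2(1+1/\alpha)\bigr)=\e^{c'}(9+2/\alpha)$. Your own accounting ($7$ from the near terms plus $2+2/\alpha$ from the tails) is also inconsistent with the total $7+2/\alpha$ you claim.

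The fix is the observation you flag but do not carry out. Take $s=s_j=\operatorname{succ}^{(j)}_{\mathcal{S}}(\theta)$ and $x\in\hat s^{-1}(s)$. Then \ref{prop:g-near-s} gives $g(x)>\operatorname{pred}_{\mathcal{S}}(g(x))\ge\operatorname{pred}_{\mathcal{S}}(s_j)=s_{j-1}$, so all of $s_1,\dots,s_{j-1}$ lie in $(\theta,g(x))$. Hence $k\ge j-1$ and $d(g(x)\|\theta)\ge(1+\alpha)\log(j-2)$. The same holds for $\operatorname{pred}^{(j)}_{\mathcal{S}}(\theta)$ below $\theta$. Now only the at most five net points strictly between $\operatorname{pred}^{(3)}_{\mathcal{S}}(\theta)$ and $\operatorname{succ}^{(3)}_{\mathcal{S}}(\theta)$ need the trivial bound. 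Each tail is at most $\e^{c'}\sum_{m\ge1}m^{-(1+\alpha)}\le\e^{c'}(1+1/\alpha)$, and $5+2(1+1/\alpha)=7+2/\alpha$. So the $7$ is five near terms plus the leading $1$ of each tail, not the near terms alone. Boundary cases, where $\theta$ lies outside the range of $\mathcal{S}$, only remove summands, so they add nothing to the constant.
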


\begin{lemma}\label{lem:5-cond-techn-lemma}
    In lemma \ref{lem:4-cond-techn-lemma}, condition \ref{prop:d-grows-fast-enough} follows from the following two conditions:
    \begin{enumerate}[label=(p \roman*')]
        \item\label{prop:obtuse-pyth} if $\theta_1 \leq \theta_2 \leq \theta_3$ or $\theta_1 \ge \theta_2 \ge \theta_3$, then 
        \[
           d(\theta_1\|\theta_3) \ge d(\theta_1\|\theta_2) + d(\theta_2\|\theta_3);
        \]
        \item\label{prop:fixed-steps} there is a constant $c > 0$ such that for any consecutive $s, s' \in \mathcal{S}$, it holds that
        \[
            d(s\|s') > c,\quad d(s'\|s) > c.
        \]
    \end{enumerate}

    If all the properties \ref{prop:loglike-to-d}--\ref{prop:g-near-s} of the data $(\mathcal{H}, \mathcal{S}, \Theta, \hat{s}, g, d)$ in lemma \ref{lem:4-cond-techn-lemma} hold, together with \ref{prop:obtuse-pyth} and \ref{prop:fixed-steps}, then $\mathcal{H}$ is $e$-variable approximable from the net $\mathcal{S}$ and the estimator $\hat s$, and the constant factor $C$ can be chosen as $C = \exp(c')\left(5 + \frac{2}{\e^c - 1}\right)$.
\end{lemma}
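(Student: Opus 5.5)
The plan is not to derive \ref{prop:d-grows-fast-enough} from \ref{prop:obtuse-pyth}--\ref{prop:fixed-steps} and then quote Lemma~\ref{lem:4-cond-techn-lemma}. That route gives a constant of the wrong shape, and it can fail outright. The chain argument below gives $d(\theta_1\|\theta_2)\ge c(k-1)$. Turning this into $(1+\alpha)\log(k-1)$ needs $(1+\alpha)/c\le\min_{m\ge1} m/\log m=\e$. That is impossible with $\alpha>0$ once $c\le 1/\e$. Instead I will redo the summation underlying Lemma~\ref{lem:4-cond-techn-lemma}, using a linear (geometric) decay in place of the logarithmic one.

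\textbf{Step 1: rewrite the expectation.} Fix $\theta\in\Theta$ and $e$-variables $e_s$ for $\probP_s$. Split $\reals^n$ into the cells $\hat s^{-1}(s)$. On each cell, use \ref{prop:loglike-to-d} to write $p_\theta(x)=p_s(x)\exp\bigl(d(g(x)\|s)-d(g(x)\|\theta)\bigr)$ wherever $p_\theta(x)>0$. Bound $d(g(x)\|s)<c'$ by \ref{prop:near-bound}, and use $\int e_s p_s\le 1$. This gives
\[
  \expectE_{\probP_\theta} e \le \frac{\e^{c'}}{C}\sum_{s\in\mathcal{S}} \sup_{x\in\hat s^{-1}(s)} \e^{-d(g(x)\|\theta)} .
\]
It therefore suffices to show that the sum is at most $5+2/(\e^c-1)$.

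\textbf{Step 2: the terms near $\theta$.} Call $s$ \emph{near} if $\theta\in[\operatorname{pred}_{\mathcal{S}}(s),\operatorname{succ}_{\mathcal{S}}(s)]$. Here $\operatorname{pred}_{\mathcal{S}}(s)$ and $\operatorname{succ}_{\mathcal{S}}(s)$ are the neighbours of $s$ in $\mathcal{S}$, with $\pm\infty$ at the ends. At most three net points are near: $\theta$ itself if it lies in $\mathcal{S}$, and its two neighbours. Each near term is bounded trivially by $1$, since $d\ge0$.

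\textbf{Step 3: the terms far from $\theta$.} Take $s$ to the right of $\theta$ with $\operatorname{pred}_{\mathcal{S}}(s)>\theta$. By \ref{prop:g-near-s}, every $x\in\hat s^{-1}(s)$ has $g(x)\ge\operatorname{pred}_{\mathcal{S}}(s)$. Applying \ref{prop:obtuse-pyth} to $\theta\le\operatorname{pred}_{\mathcal{S}}(s)\le g(x)$ in reverse order gives $d(g(x)\|\theta)\ge d(\operatorname{pred}_{\mathcal{S}}(s)\|\theta)$. Now let $s_1<\dots<s_j=\operatorname{pred}_{\mathcal{S}}(s)$ be the net points in $(\theta,\operatorname{pred}_{\mathcal{S}}(s)]$. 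Iterating \ref{prop:obtuse-pyth} along this chain and using \ref{prop:fixed-steps} on each consecutive pair yields
\[
  d(\operatorname{pred}_{\mathcal{S}}(s)\|\theta)\ge \sum_{i=1}^{j-1} d(s_{i+1}\|s_i)\ge c(j-1).
\]
Distinct far $s$ on the right give distinct $j=1,2,\dots$, so their contribution is at most $\sum_{j\ge1}\e^{-c(j-1)}=1+1/(\e^c-1)$. The left side is symmetric, using the other direction of \ref{prop:obtuse-pyth} and of \ref{prop:fixed-steps}.

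\textbf{Step 4: total.} Adding the pieces gives $3+2\bigl(1+1/(\e^c-1)\bigr)=5+2/(\e^c-1)$. Hence $C=\e^{c'}(5+2/(\e^c-1))$ makes $e$ an $e$-variable.

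The main obstacle is the bookkeeping in Step 3. I must check that the near/far split exactly matches the way \ref{prop:g-near-s} confines $g(x)$. I must also handle the edge cases: $\theta\in\mathcal{S}$, $\theta$ beyond the extreme elements of $\mathcal{S}$, and the cells where $p_\theta=0$, which contribute nothing. I must further make sure the index $j$ really is injective in $s$ and starts at $1$, so the constants come out exactly as stated.
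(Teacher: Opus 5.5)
Your proposal is correct and is essentially the paper's own proof. Like the paper, you do not actually derive (p iv); you rerun the summation from the proof of Lemma 1. You use (p iii) to confine $g(x)$ to one side of $\theta$, (p i') for monotonicity and chaining, and (p ii') for linear growth, so that at most three near terms plus two geometric tails give exactly $\e^{c'}\left(5+\frac{2}{\e^c-1}\right)$. Your bound through $d(\operatorname{pred}_{\mathcal{S}}(s)\|\theta)$, instead of the paper's $d(\operatorname{succ}_{\mathcal{S}}(g(x))\|\operatorname{pred}_{\mathcal{S}}(\theta))$, is only a cosmetic difference in indexing.

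One small slip in your motivating remark does not affect the proof. Since $m=k-1$ is an integer, the relevant minimum is $\min_{m\ge 2} m/\log m = 3/\log 3\approx 2.73$, not $\e$. So the chain bound alone fails to give (p iv) exactly when $c\le \log 3/3$, rather than when $c\le 1/\e$.
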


Both lemmas are proven by carefully estimating the terms in the right-hand side of
\[
    \expectE_{\theta} e_{\hat s(X)}(X) = \sum_{s\in\mathcal{S}} \int_{\hat s^{-1}(s)} e_s(x) \frac{p_\theta (x)}{p_s(x)} p_s(x) {\rm d}x
\]
to provide a uniform upper bound on the sum.

We will see that in many cases checking the conditions of either of two lemmas is straightforward.

In particular, it can be easily checked for many standard exponential families.

\begin{lemma}\label{lem:exp-fams-are-good}
    Let $\{ \probP_\eta\}_{\eta\in\Eta}$, with $\Eta$ an open (possibly infinite) interval in $\reals$, be an exponential family canonically parametrised by $\eta$ with support $D \subset\reals^n$ and densities $p_\eta(x) = h(x)\exp\left(\eta T(x) - A(\eta)\right)$.

    \begin{enumerate}
        \item If there is a function $\hat\eta:D\to \Eta$ that solves the likelihood equation $T(x) = \expectE_{\hat\eta(x)}T(X)$, for all $x \in D$, then condition of \ref{prop:loglike-to-d} of lemma \ref{lem:4-cond-techn-lemma} holds for $d = \kl$ and $g=\hat\eta$.
        \item The condition \ref{prop:obtuse-pyth} of lemma \ref{lem:5-cond-techn-lemma} holds for $d = \kl$.
    \end{enumerate}
\end{lemma}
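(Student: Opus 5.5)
Both parts reduce to computing the Kullback--Leibler divergence inside the exponential family in closed form. The starting point is the standard identity: for $\eta_1,\eta_2\in\Eta$,
\[
    \kl(\eta_1\|\eta_2) = \expectE_{\eta_1}\log\frac{p_{\eta_1}(X)}{p_{\eta_2}(X)} = (\eta_1-\eta_2)\mu(\eta_1) - A(\eta_1) + A(\eta_2),
\]
where $\mu(\eta)=\expectE_\eta T(X) = A'(\eta)$. This follows by substituting the densities, since $h$ cancels and $T$ enters linearly. The facts $A'(\eta)=\mu(\eta)$ and $A''(\eta)=\operatorname{Var}_\eta T(X)\ge 0$ are standard, via differentiation under the integral on the open natural parameter interval. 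In particular, $\mu$ is non-decreasing and $A$ is convex.

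For part 1, fix $\theta\in\Eta$, $s\in\mathcal{S}$ and $x\in D$, and set $\hat\eta=\hat\eta(x)$, so that $T(x)=\mu(\hat\eta)$. On the support both densities are positive, and
\[
    \log\frac{p_\theta(x)}{p_s(x)} = (\theta-s)T(x) - A(\theta) + A(s).
\]
Using the identity above with $\eta_1=\hat\eta$, we have
\[
    \kl(\hat\eta\|s) - \kl(\hat\eta\|\theta) = (\theta - s)\mu(\hat\eta) + A(s) - A(\theta).
\]
The $A(\hat\eta)$ terms cancel. Replacing $\mu(\hat\eta)$ by $T(x)$ then gives exactly \ref{prop:loglike-to-d} with $g=\hat\eta$ and $d=\kl$. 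This part is purely algebraic and needs no conditions on $\hat s$.

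For part 2, take $\theta_1\le\theta_2\le\theta_3$ (the reverse order is symmetric). Expanding the three divergences with the formula, the $A$ terms telescope, and
\[
    \kl(\theta_1\|\theta_3) - \kl(\theta_1\|\theta_2) - \kl(\theta_2\|\theta_3) = (\theta_3-\theta_2)\bigl(\mu(\theta_2)-\mu(\theta_1)\bigr).
\]
This is the usual three-point (generalized Pythagorean) identity for Bregman divergences of $A$. Since $\theta_3\ge\theta_2$ and $\mu$ is non-decreasing with $\theta_2\ge\theta_1$, the right-hand side is non-negative. In the reversed order $\theta_1\ge\theta_2\ge\theta_3$ both factors are non-positive, so the product is again non-negative. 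This gives \ref{prop:obtuse-pyth}.

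The main obstacle is bookkeeping rather than substance. One must justify $A'=\mu$ and the monotonicity of $\mu$, which is standard for a canonical family on an open interval. One must also keep track of the argument order of $\kl$ so that the telescoping produces the right sign. I would verify the three-point identity by direct expansion rather than appeal to general Bregman theory.
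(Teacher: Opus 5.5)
Your proposal is correct and follows essentially the same route as the paper: both use the closed form $\kl(\eta_1\|\eta_2)=(\eta_1-\eta_2)A'(\eta_1)-A(\eta_1)+A(\eta_2)$, substitute $T(x)=\expectE_{\hat\eta(x)}T(X)$ to get part 1, and for part 2 use the three-point (Bregman) identity together with monotonicity of $A'$. The only cosmetic difference is that you expand the identity directly in the orientation of \ref{prop:obtuse-pyth}, whereas the paper states it as $\kl(\eta_3\|\eta_1)\ge\kl(\eta_3\|\eta_2)+\kl(\eta_2\|\eta_1)$ and leaves the relabelling implicit.
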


\begin{theorem}\label{thm:most-e-var-results}
    The following families of distributions are $e$-variable-approximable. The function $\hat{s}$ is omited whenever $\hat{s}=\operatorname{round}_{\mathcal{S}}$.
    \begin{itemize}
        \item $\{\operatorname{Bin}(n,p)\}_{p\in(0,1)}$, from the net $\mathcal{S}=\left\{ \sin^2\left(\frac{\pi t}{2\lfloor \sqrt{n}\rfloor}\right)\right\}_{0<t<\sqrt{n}}$;

        \item $\{\mathcal{U}(\{0,1,\ldots,n\})\}_{n\in\mathbb{N}}$, from the net $\mathcal{S}=\{2^n\}_{n\in\mathbb{N}}$, $\hat s(n)=2^{\lceil \log_2 n \rceil}$;

        \item $\{\operatorname{Poiss}(\lambda)\}_{\lambda\in\mathbb{R}_{>0}}$, from the net $\{t^2\}_{t\in\mathbb{N}}$.
    \end{itemize}

    The following families of continuous distributions are also $e$-variable-approximable:
    \begin{itemize}
        \item $\{\mathcal{U}([0,\theta])\}_{\theta > 0}$, from the net $\mathcal{S}=\{2^n\}_{n\in\intnums}$, $\hat s(x)=2^{\lceil \log_2 x \rceil}$
        \item $\{\mathcal{N}(\mu, 1)^{\otimes n}\}_{\mu\in\mathbb{R}}$, from the net $\mathcal{S}=\frac{\alpha}{\sqrt{n}}\mathbb{Z}$ for $\alpha > 0$, $\hat{s}(x_{1:n})=\operatorname{round}_\mathcal{S}\left(\frac{1}{n}\sum_k x_k\right)$, the factor may depend on $\alpha$ but does not depend on $n$;
        \item $\{\mathcal{N}(0, \sigma^2)^{\otimes n}\}_{\sigma\in\mathbb{R}_{>0}}$, from the net $\mathcal{S}=\left\{\left(1 + \frac{1}{\sqrt{n}}\right)^k\right\}_{k\in\mathbb{Z}}$, $\hat{s}(x)=\operatorname{round}_\mathcal{S}\left(\frac{\|x\|^2}{n}\right)$, the factor does not depend on $n$;
        \item $\{ \operatorname{Cauchy}(x_0,1)\}_{x_0 \in \reals}$, from the net $\mathcal{S}=\intnums$.
    \end{itemize}

    For any $\varepsilon \le 1/5$ and let $r^\varepsilon$ be any function defined on $\reals\setminus\left(\intnums+0.5\right)^\varepsilon$ as $\operatorname{round}_\intnums$, and on for each $\varepsilon$-neighbourhood of a half-integer $n+0.5$ defined as either $n$ or $n+1$ (maybe differently for each $n$), then $\{\mathcal{N}(\mu,1)\}_{\mu\in\reals}$ and $\{\Cauchy(x_0,1)\}_{x_0\in\reals}$ are both $e$-variable approximable from $\mathcal{S}=\intnums$ and $\hat{s}=r^\varepsilon$. 
\end{theorem}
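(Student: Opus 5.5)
The plan is to treat each family in Theorem \ref{thm:most-e-var-results} separately. In each case I will verify the hypotheses of Lemma \ref{lem:4-cond-techn-lemma} or of Lemma \ref{lem:5-cond-techn-lemma}, choosing the parametrisation, $g$ and $d$ so that \ref{prop:loglike-to-d} is an identity. For the exponential families (binomial, Poisson, $\mathcal{N}(\mu,1)^{\otimes n}$, $\mathcal{N}(0,\sigma^2)^{\otimes n}$) I will use Lemma \ref{lem:exp-fams-are-good}. Take $d=\kl$ and $g=\hat\eta$, the maximum likelihood estimator, i.e. $k/n$, $x$, $\bar x$, $\|x\|^2/n$ expressed in the natural parameter. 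This gives \ref{prop:loglike-to-d} and \ref{prop:obtuse-pyth} directly. Reparametrising monotonically (e.g. $p$ instead of $\log\frac{p}{1-p}$) does not affect the order conditions \ref{prop:g-near-s}, \ref{prop:obtuse-pyth}, \ref{prop:fixed-steps}. For $\hat s=\operatorname{round}_{\mathcal S}$, condition \ref{prop:g-near-s} is immediate, since the MLE and its rounding lie in the same or adjacent cells of $\mathcal{S}$.

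The remaining work for the exponential families is to check \ref{prop:near-bound} and \ref{prop:fixed-steps}: the KL divergence between adjacent net points must be bounded above and below by constants independent of position and of $n$. For the Gaussian location family $\kl(\mu\|\mu')=n(\mu-\mu')^2/2$, and the spacing $\alpha/\sqrt n$ gives exactly $\alpha^2/2$. For the scale family the divergence between $\sigma^2$ and $(1+1/\sqrt n)\sigma^2$ is $\frac n2(r-1-\log r)\approx \frac14$. For Poisson, $\kl(t^2\|(t+1)^2)\approx (2t+1)^2/(2t^2)\to 2$, and this ratio is bounded above and below uniformly in $t$. One must also cover $\lambda$ below $1$; either the estimator maps $x=0$ to the smallest net point, or the small-$\lambda$ region is handled directly by bounding $p_\lambda(0)/p_1(0)\le \e$. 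For the binomial I will use the Fisher-metric comparison: with $p=\sin^2\phi$ the Fisher information in $\phi$ is the constant $4n$, so $\kl$ between neighbours with $\Delta\phi=\pi/(2\lfloor\sqrt n\rfloor)$ is $\asymp 1$. Making this uniform near the endpoints $p\to 0,1$, where $\kl$ is not locally quadratic, needs separate care; I plan to bound $\kl(p\|q)$ there by $\chi^2$-type two-sided inequalities. I expect this uniform two-sided KL estimate for the binomial (and similarly for small Poisson $\lambda$) to be the main technical obstacle.

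For the non-exponential families I will use Lemma \ref{lem:4-cond-techn-lemma} directly. For $\mathcal{U}([0,\theta])$ take $g(x)=x$ and $d(a\|b)=\log(b/a)$ for $b\ge a$, and $+\infty$ otherwise. Then for $x\le\theta$ and $s=2^{\lceil\log_2 x\rceil}$ we get $\log\frac{p_\theta(x)}{p_s(x)}=\log(s/\theta)=d(x\|s)-d(x\|\theta)$, which is \ref{prop:loglike-to-d}. Condition \ref{prop:near-bound} holds with $c'=\log 2$, and \ref{prop:d-grows-fast-enough} holds because $\log(\theta_2/\theta_1)\ge (k-1)\log 2\ge (1+\alpha)\log(k-1)$ with $\alpha$ small, up to a finite check for small $k$. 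The discrete uniform family is handled the same way with counting densities and $d(a\|b)=\log\frac{b+1}{a+1}$.

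For Cauchy, \ref{prop:loglike-to-d} is not an exact identity. Here I will bypass the lemma and bound the sum directly:
\[
\expectE_{\theta}e_{\hat s(X)}(X)\le \sum_s \sup_{x\in \hat s^{-1}(s)}\frac{p_\theta(x)}{p_s(x)}.
\]
The ratio is $\frac{1+(x-s)^2}{1+(x-\theta)^2}\le \frac{5/4}{1+(|s-\theta|-1/2)^2}$, which is summable over $s\in\intnums$ uniformly in $\theta$. The same direct estimate covers the perturbed estimators $r^\varepsilon$, since cells then have diameter at most $1+2\varepsilon$ and $|x-s|\le\frac12+\varepsilon$. For $\mathcal N(\mu,1)$ with $r^\varepsilon$, conditions \ref{prop:near-bound}, \ref{prop:g-near-s} still hold with $c'=(\frac12+\varepsilon)^2/2$, so Lemma \ref{lem:5-cond-techn-lemma} applies as before.
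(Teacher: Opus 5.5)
Your proposal is sound. For the Poisson and the two Gaussian families it is exactly the paper's argument: Lemma~\ref{lem:exp-fams-are-good} gives (p~i) and (p~i') with $d=\kl$ and $g$ the MLE, you bound the divergence between adjacent net points from both sides, and Lemma~\ref{lem:5-cond-techn-lemma} finishes. Your remark about $x=0$ for Poisson improves on the paper, whose $g=\log$ is undefined there. Extend $d$ by its limit $\kl_\lambda(0\|\lambda)=\lambda$ and set $\hat s(0)=1$. Then (p~i) is an exact identity at $x=0$ as well, with $d(0\|1)=1$.

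You depart from the paper in two places.

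\textbf{Uniform families.} The paper uses no lemma here. It sums directly over dyadic blocks, $\expectE_N e_{\hat s(X)}(X)\le\sum_n\frac{2^{n+1}+1}{N+1}$. Your route through Lemma~\ref{lem:4-cond-techn-lemma} uses the extended-valued $d(a\|b)=\log(b/a)$ for $a\le b$ and $+\infty$ otherwise. It is correct: $\alpha=\frac12$ suffices because $2^m\ge m^{3/2}$ for all $m\ge1$. It shows these families fit the same framework, at the cost of a worse constant.

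\textbf{Cauchy.} Your claim that (p~i) is not an exact identity is mistaken. With $g(x)=x$ and the non-KL choice $d(a\|b)=\log(1+(a-b)^2)$, it is exact. The paper then applies Lemma~\ref{lem:4-cond-techn-lemma} with $\alpha=1$, since $d(\theta_1\|\theta_2)\ge2\log(k-1)$. Your direct summation is just the first half of that lemma's proof written out for this case. Your bound $\frac{5/4}{1+(|s-\theta|-1/2)^2}$ does hold when $|x-s|\le\frac12$, so your conclusion stands either way.

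\textbf{Binomial.} This is where your proposal is unfinished, but the paper's appendix contains no proof for this case at all. Three adjustments close it.

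First, at $k\in\{0,n\}$ the natural parameter is $\pm\infty$, so Lemma~\ref{lem:exp-fams-are-good} does not literally apply. Work in $p$ instead, with $g(k)=k/n$ and $d=n\kl_{\mathrm{Bern}}$ (using $0\log0=0$). Then (p~i) holds at the endpoints too.

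Second, the endpoint difficulty you anticipate disappears if you replace the local Fisher expansion by the Bhattacharyya coefficient. For $p=\sin^2\phi$, $q=\sin^2\psi$ one has $\sqrt{pq}+\sqrt{(1-p)(1-q)}=\cos(\phi-\psi)$. Also $\KL\ge-2\log\mathrm{BC}$, by monotonicity of R\'enyi divergence in the order. Write $p_t=\sin^2\phi_t$ with $\phi_t=\frac{\pi t}{2m}$ and $m=\lfloor\sqrt n\rfloor$. Adjacent net points then satisfy $n\kl\ge-2n\log\cos\frac{\pi}{2m}\ge n/m^2\ge1$ in both directions, uniformly up to $p\to0,1$. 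This gives (p~i'). For (p~ii), combine $\kl\le\chi^2=\frac{(\hat p-s)^2}{s(1-s)}$ with $p_{t+1}-p_t=\sin\frac{\pi}{2m}\sin(\phi_t+\phi_{t+1})$. This gives a bound of order $n/m^2$, including on the cells beyond the extreme net points.

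Third, the index range $0<t<\sqrt n$ admits $t=m$, i.e.\ $p=1$, whenever $n$ is not a square. That point must be discarded, since $\operatorname{Bin}(n,1)\notin\mathcal H$. Moreover, an $e$-variable such as $M\,\mathbb{I}\{k\neq n\}$ for $\operatorname{Bin}(n,1)$ breaks any fixed factor.
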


\begin{corollary}
    For all $n$ and $x_{1:n} \in \reals^n$, let $r_n(x_{1:n})=\operatorname{round}_{\frac{1}{\sqrt{n}}\intnums}\left(\frac{1}{n}\sum_k x_k\right)$. For some constant $C > 0$, the following holds.
    
    For any $\{e_\mu\}_{\mu\in\reals}$, a family of measurable functions, such that each $e_\mu$ is an $e$-variable w.r.t. $\{\mathcal{N}(\mu,1)\}$, the function
    \[
        e(x_{1:n}) = \frac{1}{C}\prod_{k=1}^n e_{r_n(x_{1:n})}(x_k)
    \]
    is an $e$-variable for $\{\mathcal{N}(\mu,1)^{\otimes n}\}_{\mu\in\reals}$.
\end{corollary}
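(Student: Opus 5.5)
The plan is to reduce the Corollary to the Gaussian item of Theorem~\ref{thm:most-e-var-results}, namely $e$-variable approximability of $\{\mathcal{N}(\mu,1)^{\otimes n}\}_{\mu\in\reals}$ from the net $\mathcal{S}=\frac{\alpha}{\sqrt n}\intnums$ with estimator $\hat s(x_{1:n})=\operatorname{round}_{\mathcal{S}}\left(\frac1n\sum_k x_k\right)$. I would apply that item with $\alpha=1$, so that $\hat s=r_n$. The theorem asserts that the factor $C$ then depends only on $\alpha$ and not on $n$. I would take this same $C$ as the constant in the Corollary, which is what makes it uniform over all $n$.

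Given the family $\{e_\mu\}_{\mu\in\reals}$, I would define for each $s\in\mathcal{S}$ the product
\[
E_s(x_{1:n})=\prod_{k=1}^n e_s(x_k).
\]
Each $E_s$ is a nonnegative measurable function on $\reals^n$. It is an $e$-variable for the simple hypothesis $\{\mathcal{N}(s,1)^{\otimes n}\}$, because under this product law the coordinates $X_k$ are independent, each distributed as $\mathcal{N}(s,1)$. By independence (Tonelli suffices, since everything is nonnegative),
\[
\expectE\prod_k e_s(X_k)=\prod_k\expectE e_s(X_k)\le 1.
\]
Here $\mathcal{N}(s,1)^{\otimes n}$ is exactly the distribution $\probP_s$ indexed by $s$ in the approximability statement, since the net lies in the parameter space $\reals$.

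Next I would plug the family $\{E_s\}_{s\in\mathcal{S}}$ into Definition~\ref{dfn:e-val-approx}. This yields that
\[
\frac1C E_{\hat s(x_{1:n})}(x_{1:n})=\frac1C\prod_{k=1}^n e_{r_n(x_{1:n})}(x_k)=e(x_{1:n})
\]
is an $e$-variable for $\{\mathcal{N}(\mu,1)^{\otimes n}\}_{\mu\in\reals}$, which is the claim.

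The only genuinely delicate point is the uniformity of $C$ in $n$. This rests entirely on the theorem's assertion that the factor does not depend on $n$. If I had to justify it, I would go through Lemma~\ref{lem:5-cond-techn-lemma} with $d=\kl$ and $g(x)=\bar x$, using Lemma~\ref{lem:exp-fams-are-good} for (p i) and (p i'). For this family $\kl(\mu\|\mu')=\frac n2(\mu-\mu')^2$. Hence condition (p ii) holds with $c'=\frac n2\cdot\frac{1}{4n}=\frac18$ when the rounding error is at most $\frac{1}{2\sqrt n}$, and (p ii') holds with $c$ slightly below $\frac12$ for consecutive net points. Both constants are free of $n$, so $C=\e^{1/8}(5+2/(\e^{c}-1))$ is uniform in $n$.
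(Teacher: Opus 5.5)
Your proposal is correct and is essentially the paper's own argument. The paper applies the Gaussian item of Theorem~\ref{thm:most-e-var-results} (here with $\alpha=1$) to the product $e$-variables $\prod_k e_s(x_k)$, which are valid under $\mathcal{N}(s,1)^{\otimes n}$ by independence, and gets uniformity of $C$ in $n$ from the theorem. Your extra check via Lemma~\ref{lem:5-cond-techn-lemma} is also sound, up to one nit: (p ii) asks for the strict inequality $d<c'$, so take $c'$ slightly above $\frac18$, which changes nothing.
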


The corollary immediately follows from Theorem \ref{thm:most-e-var-results} and the fact that a product of $e$-variables is an $e$-variable for the product distribution.

\subsection{Interpolating approximation}\label{sec:interp-appr}

Discontinuity of $\operatorname{round}_{\mathcal{S}}$ makes this function uncomputable for distributions on real numbers, which prevents us from applying our results directly to randomness deficiency.

For $\operatorname{round}_{\reals}$ we would like to ``smoothen'' the transition  between $n$ and $n+1$, and similarly for $\operatorname{round}_\mathcal{S}$ in general.

\begin{definition}
    We say that the family $\{\probP_\theta\}_{\theta\in\Theta}$ is \emph{continuously $e$-variable-approximable} from the net $\mathcal{S} \subset \Theta$ if there is a set of functions $\{\hat s(\cdot,s)\}_{s\in\mathcal{S}}$, constituting a partition of unity, and a constant $C > 0$ such that a function $e$ defined as
    \[
        e(x) = \frac{1}{C}\sum_{s\in \mathcal{S}} e_s(x) \hat s(x, s)
    \]
    is an $e$-variable for $\mathcal{P}$ whenever all $e_s$ are $e$-variables for corresponding $\{\probP_s\}$.
\end{definition}

\begin{theorem}\label{thm:contin-e-var}
    For $\mathcal{P} = \{\probP_\theta\}=\{\mathcal{N}(\theta,1)\}_{\theta\in\reals}$ or $\{\Cauchy(\theta,1)\}_{\theta\in\reals}$, for any $\{e_n\}_{n\in\intnums}$ such that $e_n$ is an $e$-variable for $\{\probP_n\}$, there is a constant $C>0$ such that for any $\varepsilon \le 1/5$ the function
    \[
        e^\varepsilon(x) = \frac{1}{C} \sum_{n=-\infty}^\infty e_n(x) \mathbb{I}_{\left[n-\frac{1}{2}+\varepsilon, n + \frac{1}{2}-\varepsilon\right]}^{2\varepsilon}(x)
    \]
    is an $e$-variable for $\mathcal{P}$.
\end{theorem}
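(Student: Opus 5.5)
The plan is to reduce Theorem~\ref{thm:contin-e-var} to the last part of Theorem~\ref{thm:most-e-var-results}, the one about the rounding functions $r^\varepsilon$, by dominating the interpolated sum pointwise by the sum of two discontinuous roundings.

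I read $\mathbb{I}^{2\varepsilon}_{[a,b]}$ as the continuous ``trapezoid'' function. It equals $1$ on $[a,b]$, vanishes outside $(a-2\varepsilon,b+2\varepsilon)$, and interpolates linearly in between. With $a=n-\frac12+\varepsilon$ and $b=n+\frac12-\varepsilon$, consecutive trapezoids overlap exactly on the ramps $(n+\frac12-\varepsilon,\,n+\frac12+\varepsilon)$. So the weights $w_n(x)=\mathbb{I}^{2\varepsilon}_{[n-\frac12+\varepsilon,n+\frac12-\varepsilon]}(x)$ form a partition of unity with values in $[0,1]$. Moreover, at each $x$ at most two weights are nonzero, and only when $x$ lies in the $\varepsilon$-neighbourhood of a half-integer.

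First step: define two rounding functions. Let $r_1$ equal $\operatorname{round}_\intnums$ outside $(\intnums+0.5)^\varepsilon$ and equal $n$ on the neighbourhood of $n+\frac12$. Let $r_2$ be the same except that it takes the value $n+1$ there. Both are of the form $r^\varepsilon$ from Theorem~\ref{thm:most-e-var-results}. Since $0\le w_n\le 1$ and $w_n$ can be positive only for $n\in\{r_1(x),r_2(x)\}$, we obtain pointwise
\[
\sum_{n} e_n(x)\,w_n(x)\;\le\; e_{r_1(x)}(x)+e_{r_2(x)}(x).
\]
Second step: let $C'$ be the factor from Theorem~\ref{thm:most-e-var-results}. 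Then $e_{r_i(x)}(x)/C'$ is an $e$-variable for $\mathcal{P}$ for $i=1,2$. Taking expectation under any $\probP_\theta$ gives $\expectE_\theta\, e^\varepsilon\le 2C'/C$, so $C=2C'$ works.

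The main obstacle is that $C$ must not depend on $\varepsilon$ or on the particular choice of $r^\varepsilon$. I will check that the factor $C'$ in Theorem~\ref{thm:most-e-var-results} is uniform.

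For the normal family I will go through Lemma~\ref{lem:4-cond-techn-lemma} and Lemma~\ref{lem:5-cond-techn-lemma} with $\mathcal{S}=\intnums$, $g(x)=x$ and $d=\kl$, so $d(\theta_1\|\theta_2)=(\theta_1-\theta_2)^2/2$; these are exactly the quantities appearing in \ref{prop:loglike-to-d}.
\begin{itemize}
\item Property \ref{prop:loglike-to-d} holds by Lemma~\ref{lem:exp-fams-are-good}.
\item Property \ref{prop:obtuse-pyth} holds, again by Lemma~\ref{lem:exp-fams-are-good}.
\item Property \ref{prop:fixed-steps} holds with $c=1/2$.
\item Property \ref{prop:g-near-s} holds because $|x-r^\varepsilon(x)|\le\frac12+\varepsilon<1$.
\item Property \ref{prop:near-bound} holds with $c'=0.7^2/2$, since $|x-r^\varepsilon(x)|\le 0.7$ when $\varepsilon\le1/5$.
\end{itemize}
Here $c'$ depends only on the bound $0.7$, so the resulting $C'=\exp(c')\bigl(5+\frac{2}{\e^{c}-1}\bigr)$ is uniform in $\varepsilon$.

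For the Cauchy family the likelihood ratio is not of the form \ref{prop:loglike-to-d}. There I will redo the direct estimate of
\[
\sum_{s}\int_{\hat s^{-1}(s)} e_s(x)\,\frac{p_\theta(x)}{p_s(x)}\,p_s(x)\,{\rm d}x .
\]
On $\hat s^{-1}(s)\subseteq[s-0.7,s+0.7]$ the ratio $\frac{1+(x-s)^2}{1+(x-\theta)^2}$ is at most a constant times $\frac{1}{1+(s-\theta)^2}$. This makes the sum over $s$ at most a constant times $\sum_s \frac{1}{1+(s-\theta)^2}$, and that constant is again uniform in $\varepsilon$ and in $\theta$.
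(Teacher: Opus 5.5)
Your proof is correct and is essentially the paper's. Both reduce to the $r^\varepsilon$ clause of Theorem~\ref{thm:most-e-var-results} with two tie-breaking rules and add the resulting $e$-variables to get $C=2C'$; the paper breaks ties by parity and absorbs the weights into the $e_n$ to get an exact identity, while you break ties down/up and use the domination $w_n\le 1$. Your explicit check that $C'$ is uniform in $\varepsilon$ is a useful addition the paper leaves implicit, and your one slip is harmless: the Cauchy likelihood ratio \emph{is} of the form \ref{prop:loglike-to-d} with $g(x)=x$ and $d(\theta_1\|\theta_2)=\log\bigl(1+(\theta_1-\theta_2)^2\bigr)$, which is how the paper applies Lemma~\ref{lem:4-cond-techn-lemma}, but your direct estimate works just as well.
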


Note that for any $x$ at most two terms in the sum are non-zero.

The proof illustrates flexibility of $e$-variable approximability property.

\begin{proof}
	Notice that an $e$-variable multiplied by a continuous $[0,1]$-valued function is also an $e$-variable
    For a given family $\{e_n\}_{n\in\intnums}$, define two new families: $\{e_n^{\textrm{even}}\}_{n\in\intnums}$ and $\{e_n^{\textrm{odd}}\}_{n\in\intnums}$ in the following way. Let $e_n^{\textrm{even}}$ be equal to $e_n\cdot \mathbb{I}^{2\varepsilon}_{[n-0.5-\varepsilon, n+0.5 +\varepsilon]}$ for even $n$, and equal to $0$ for odd $n$; similarly for $e_n^\textrm{odd}$.

    Let $\hat{s}_\textrm{even}$ be the rounding function $\operatorname{round}_{\intnums}$, redefined on $\varepsilon$-neighbourhoods of half-integers to round to the nearest even number. We define $\hat{s}_\textrm{odd}$ in a similar way.

    Then there is some $C_0 > 0$ such that both $e^\textrm{even}(x)=\frac{1}{C_0}e^\textrm{even}_{\hat{s}_\textrm{even}(x)}(x)$ and $e^\textrm{odd}(x)=\frac{1}{C_0}e^\textrm{odd}_{\hat{s}_\textrm{even}(x)}(x)$ are $e$-variables with respect to $\mathcal{P}$. Arithmetic mean of two $e$-variables is trivially an $e$-variable, and so for $C = 2C_0$ we obtain that
    \[
        e^\varepsilon(x)=\frac{e^\textrm{even}(x) + e^\textrm{odd}(x)}{2} = \frac{1}{C} \sum_{n=-\infty}^\infty e_n(x) \mathbb{I}_{\left[n-\frac{1}{2}+\varepsilon, n + \frac{1}{2}-\varepsilon\right]}^{2\varepsilon}(x)
    \]
    is also an $e$-variable w.r.t. $\mathcal{P}$.
\end{proof}

\section{Lower semicomputable approximations of randomness deficiency $\inf$-projection}\label{sec:rand-def-res}

We say that a family of distributions $\{\probP_\theta\}_{\theta\in\Theta}$ is computable if the function $(\theta \mapsto \probP_\theta)$ is computable.

\begin{lemma}\label{lem:semicomp-test-discrete}
    For a computable family of distributions $\mathcal{P}=\{\probP_\theta\}_{\theta\in\Theta}$ on a discrete set $\mathcal{X}$ that is $e$-variable-approximable from a net $\mathcal{S}$ with a computable estimator $\hat{s}$ and factor $C$, then there is a randomness test $f: \mathcal{X}\times\mathfrak{P}(\mathcal{X}) \times \mathcal{Y} \to [0,\infty]$ such that 
    \[
        f(x\mid\probP_\theta, y) =^\times \uut(x\mid \probP_{\hat{s}(x)}, y)
    \]
    for all $\theta\in\Theta$.
\end{lemma}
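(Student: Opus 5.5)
We build $f$ directly as the discrete counterpart of the $e$-variable from Definition~\ref{dfn:e-val-approx}. Set
\[
    f_0(x\mid \probP, y) = \frac{1}{C}\,\uut\bigl(x\mid \probP_{\hat s(x)}, y\bigr).
\]
We then do three things. First, we check that $f_0(\cdot\mid\probP_\theta,y)$ is an $e$-variable for every $\theta$. Second, we turn $f_0$ into a test defined for all measures $\probP$, keeping $e$-validity for every $\probP$ and lower semicomputability. Third, we compare the result with $\uut(x\mid\probP_{\hat s(x)},y)$ on the family.

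\textbf{Validity on the family.} For each $s\in\mathcal{S}$ put $e_s(x)=\uut(x\mid\probP_s,y)$. By condition~2 of Definition~\ref{dfn:rand-test}, $e_s$ is an $e$-variable for $\{\probP_s\}$. Since $\mathcal{P}$ is $e$-variable approximable from $\mathcal{S}$ with estimator $\hat s$ and factor $C$, the function $\frac1C e_{\hat s(x)}(x)=f_0(x\mid\cdot,y)$ is an $e$-variable under every $\probP_\theta$. Note that $f_0$ does not depend on its measure argument at all. It is lower semicomputable for three reasons:
\begin{itemize}
    \item $\mathcal{X}$ is discrete and $\hat s$ is computable, so $\hat s(x)$ can be computed exactly;
    \item $\theta\mapsto\probP_\theta$ is computable, so from $s=\hat s(x)$ we can produce $\varepsilon$-approximations of $\probP_s$ and feed them to the oracle of $\uut$;
    \item $\uut$ is lower semicomputable, and composing an l.s.c.\ function with a computable map, then multiplying by the fixed constant $1/C$, stays l.s.c.
\end{itemize}

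\textbf{Extension to all measures.} The obstacle is that $f_0(\cdot\mid\probP,y)$ need not be an $e$-variable for a measure $\probP\notin\mathcal{P}$, so $f_0$ itself is not a uniform test. The standard remedy is Levin's trimming: from an l.s.c.\ function one builds a test $f\le f_0$ that coincides with $f_0$ wherever $f_0(\cdot\mid\probP,y)$ already has expectation at most $1$ under $\probP$.

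Concretely, enumerate $f_0$ from below as an increasing sequence of rational step functions $f_0^{(k)}$. At each stage, using the finite approximations of $\probP$ available from the oracle, accept an increment only while the lower-semicomputed bound on $\sum_x \probP(x)\,f^{(k)}(x)$ stays at most $1$. Freeze the values once this fails, or add a halving/truncation of the increment so the bound is never exceeded. The resulting limit $f$ is lower semicomputable and an $e$-variable for every $\probP$.

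On $\probP=\probP_\theta$ the expectation constraint is never violated, since $\expectE_{\probP_\theta}f_0\le1$. Hence, up to a multiplicative constant lost in making the trimming strict (e.g.\ replacing $C$ by $2C$), we get $f(\cdot\mid\probP_\theta,y)=^\times f_0$. This is the step requiring the most care. Expectations under $\probP$ are only upper-approximable from the oracle, so the strict inequality in the trimming must absorb the approximation error. I expect the factor-$2$ slack to suffice here, as in the usual construction of the universal test.

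\textbf{Comparison.} On the family we now have
\[
    f(x\mid\probP_\theta,y)=^\times \tfrac1C\,\uut\bigl(x\mid\probP_{\hat s(x)},y\bigr)=^\times \uut\bigl(x\mid\probP_{\hat s(x)},y\bigr),
\]
which is the claimed statement. If one also wants $f\le^\times\uut$ as a byproduct, it follows from universality of $\uut$, since $f$ is a uniform test.
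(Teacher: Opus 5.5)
Your proposal is correct and follows essentially the paper's argument. You use $e$-variable approximability with $e_s=\uut(\cdot\mid\probP_s,y)$ to see that $\uut(x\mid\probP_{\hat s(x)},y)$ has $\probP_\theta$-expectation at most $C$, and then trim its finitely described lower approximants by the semi-decidable condition $\int g_n\,{\rm d}\probP<2C$, so that nothing is cut on the family. Normalising by $1/C$ first, as you do, gives the correct final constant $1/(2C)$; the paper's displayed factor $\tfrac12$ is a slip.

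One wording fix: the acceptance test must certify the strict inequality using an \emph{upper} approximation of $\int g_n\,{\rm d}\probP$, not a lower-semicomputed bound. This approximation is available because $g_n$ is a finite step function on a discrete space, and your factor-$2$ slack already provides the room the strict inequality needs.
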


\begin{proof}[Sketch of the proof]
    The proof technique is standard for the field.
    Lower semi-computability of $\uut$ together with computability of $\probP$ allows us to enumerate a list of bounded continuous piece-wise linear functions $0=g_0 \le g_1\le\ldots$ with finite description, with $g_n(x, y)\uparrow \uut(x\mid\probP_{\hat s(x)}, y)$. For these function we can estimate integrals $\int g_n(x, y)\probP({\rm d}x)$. Without loss of generality, we say that $C \ge 1$ and rational. We can then define
    \[
        f(x\mid \probP, y) = \frac{1}{2} \sup_n \left\{g_n(x, y) \mid \int g_n(x, y)\probP({\rm d}x) < 2C\right\}.
    \]
    From $e$-variable-approximability it follows that the condition on integral being less than $2C$ is always true for $\probP \in \mathcal{P}$, and thus $f =^\times \uut$ in this case.
\end{proof}

\begin{theorem}\label{thm:rand-def-of-a-family}
    For a computable family $\mathcal{P} = \{\probP_\theta\}_{\theta\in\Theta}$ that is $e$-variable-approximable with a computable estimator $\hat{s}$, it holds that
    \[
        \uut^{\mathcal{P}}(x\mid y) =^\times \uut(x \mid \probP_{\hat s(x)}, y).
    \]
\end{theorem}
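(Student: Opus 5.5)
We prove the two inequalities $\uut^{\mathcal{P}}(x\mid y) \le^\times \uut(x\mid \probP_{\hat s(x)}, y)$ and $\uut^{\mathcal{P}}(x\mid y) \ge^\times \uut(x\mid \probP_{\hat s(x)}, y)$ separately.

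\emph{Upper bound.} The net distributions belong to the family, since $\{\probP_s\}_{s\in\mathcal{S}}\subset\mathcal{P}$ by Definition \ref{dfn:e-val-approx}. Hence $\probP_{\hat s(x)}$ is one of the distributions over which the infimum is taken, so
\[
    \uut^{\mathcal{P}}(x\mid y)=\inf_{\theta\in\Theta}\uut(x\mid\probP_\theta,y)\le \uut(x\mid \probP_{\hat s(x)},y),
\]
with constant $1$. This direction needs no further work.

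\emph{Lower bound.} We show $\uut(x\mid\probP_\theta,y)\ge^\times \uut(x\mid\probP_{\hat s(x)},y)$ uniformly in $\theta$, using universality. Lemma \ref{lem:semicomp-test-discrete} (in the discrete case) supplies a uniform randomness test $f$ with $f(x\mid\probP_\theta,y)=^\times\uut(x\mid\probP_{\hat s(x)},y)$ for every $\theta\in\Theta$, where the constants do not depend on $\theta$. By the universality theorem there is $C_f>0$ with $\uut(x\mid\probP_\theta,y)\ge C_f f(x\mid \probP_\theta,y)$ for all arguments. Combining these and taking the infimum over $\theta$ gives $\uut^{\mathcal{P}}(x\mid y)\ge^\times \uut(x\mid\probP_{\hat s(x)},y)$.

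The role of $e$-variable-approximability is hidden in Lemma \ref{lem:semicomp-test-discrete}. For every $y$, the functions $e_s=\uut(\cdot\mid\probP_s,y)$ are $e$-variables under $\{\probP_s\}$. Therefore $x\mapsto \frac1C\uut(x\mid\probP_{\hat s(x)},y)$ is an $e$-variable under the whole of $\mathcal{P}$. This is exactly what allows the truncation in the construction of $f$ (keeping only approximants with integral below $2C$) to be inactive on $\mathcal{P}$.

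\emph{Main obstacle.} The difficult step is obtaining the test $f$ outside the discrete setting. There, $\hat s$ is typically a discontinuous rounding map and cannot be computable. My plan is to repeat the proof of Lemma \ref{lem:semicomp-test-discrete} with $\uut(x\mid\probP_{\hat s(x)},y)$ replaced by the interpolated version $\sum_{s}\uut(x\mid\probP_s,y)\,\hat s(x,s)$, which is l.s.c. because the weights form a continuous partition of unity. Theorem \ref{thm:contin-e-var} shows this sum is still $C$ times an $e$-variable under $\mathcal{P}$. The remaining task is to check that the interpolated quantity and $\uut(x\mid\probP_{\hat s(x)},y)$ agree up to a multiplicative constant. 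This should follow from the fact that at most two neighbouring terms are nonzero, so the sum is at least a constant times one of them. The point requiring care is that $\hat s(x)$ near a boundary may pick either neighbour. Here I would use the final clause of Theorem \ref{thm:most-e-var-results}, which allows either rounding choice inside the $\varepsilon$-neighbourhoods, so the upper-bound direction still applies with whichever neighbour is chosen.
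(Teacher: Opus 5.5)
Your argument is the paper's proof. The trivial direction holds because $\probP_{\hat s(x)}\in\mathcal{P}$, and the nontrivial direction comes from the test $f$ of Lemma \ref{lem:semicomp-test-discrete} (whose truncation is inactive on $\mathcal{P}$ by $e$-variable-approximability), combined with universality of $\uut$ and an infimum over $\theta$ with $\theta$-independent constants.

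Your closing ``main obstacle'' paragraph is not needed, since the statement already assumes $\hat s$ is computable. The rounding case on $\reals$ is the separate Theorem \ref{thm:rand-def-norm-cauchy}, and there the paper compares neighbouring terms using Lemma \ref{lem:distr-shift}. The last clause of Theorem \ref{thm:most-e-var-results}, which you invoke instead, only gives an $e$-variable bound. It cannot supply the pointwise comparison you would need when $\hat s(x)$ is the neighbour carrying small weight.
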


\begin{proof}
    Let $f$ be the uniform test from lemma \ref{lem:semicomp-test-discrete}.
    
    For any $\theta$
        \[
            f(x\mid \probP_\theta, y) =^\times \uut(x \mid \probP_{\hat s(x)}, y),
        \]
    where the right hand term does not depend on $\theta$.
    
    We get then, that for any $\theta$
    \[
        f(x\mid \probP_\theta, y) =^\times \uut(x\mid \probP_{\hat s(x)}, y) \ge \uut^{\mathcal{P}}(x\mid y).
    \]

    From optimality of randomness deficiency we also obtain that
    \[
        \uut(x\mid \probP_\theta, y) \ge^\times f(x \mid \probP_\theta, y) =^\times \uut(x \mid\probP_{\hat s(x)}, y).
    \]

    Since constants under $\ge^\times, =^\times$ do not depend on $x, \theta, y$, we can take the infimum over all $\theta$ and get
    \[
        \uut^{\mathcal{P}}(x\mid y) \ge^\times \uut(x \mid \probP_{\hat s(x)}, y),
    \]
    which concludes the proof.
\end{proof}

\begin{lemma}\label{lem:semicomp-test-contin}
    For a computable family of distributions $\mathcal{P}=\{\probP_\theta\}_{\theta\in\Theta}$ that is continuously $e$-variable-approximable from a uniformely computable net $\mathcal{S}=\{s_i\}_{i\in\mathcal{I}}$, $\mathcal{I}\subset \natnums$ with $\hat{s}$ being l.s.c. uniformely in both arguments, and factor $C$, then there is a randomness test $f: \mathcal{X}\times\mathfrak{P}(\mathcal{X}) \times \mathcal{Y} \to [0,\infty]$ such that 
    \[
        f(x\mid\probP_\theta, y) =^\times \sum_{s\in \mathcal{S}}\uut(x\mid \probP_s, y) \hat s(x, s)
    \]
    for all $\theta\in\Theta$.
\end{lemma}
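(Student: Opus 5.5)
We will follow the proof of Lemma \ref{lem:semicomp-test-discrete} and adapt it to the continuous setting. The new ingredient is that the target function is a sum over the net weighted by the partition of unity. Define
\[
    G(x,y) = \sum_{s\in\mathcal{S}} \uut(x\mid \probP_s, y)\, \hat s(x,s).
\]
The first step is to show that $G$ is lower semicomputable as a function of $(x,y)$. Each summand is a product of two non-negative l.s.c. functions. The first factor is l.s.c. because $\uut$ is l.s.c. and $s_i\mapsto\probP_{s_i}$ is computable, being a composition of the uniformly computable net with the computable family. The second factor is l.s.c. by assumption, uniformly in $i$. Products of non-negative l.s.c. functions are l.s.c., and a countable sum of uniformly l.s.c. non-negative functions is l.s.c. because partial sums can be enumerated from below. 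Hence $G$ is l.s.c., and we can effectively enumerate an increasing sequence $0=g_0\le g_1\le\ldots$ of bounded, continuous, piecewise linear functions with finite rational descriptions such that $g_n\uparrow G$ pointwise. If necessary, we replace $g_n$ by the pointwise maximum of $g_0,\ldots,g_n$ to keep the sequence monotone.

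Next, assume without loss of generality that $C\ge 1$ is rational, and set
\[
    f(x\mid \probP, y) = \frac{1}{2}\sup_n\left\{ g_n(x,y) \,\middle|\, \int g_n(z,y)\,\probP({\rm d}z) < 2C\right\}.
\]
Three properties then need checking, and we will take them in order.

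\textbf{(a) $f$ is l.s.c.} Each $g_n(\cdot,y)$ is bounded and continuous with a finite description, so $\probP\mapsto\int g_n\,{\rm d}\probP$ is computable from an oracle for $\probP$ in the L\'evy--Prokhorov metric. The strict inequality $<2C$ is therefore semidecidable, so the supremum is taken over an enumerable set of candidates, which gives lower semicomputability.

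\textbf{(b) $f(\cdot\mid\probP,y)$ is an $e$-variable for every $\probP$.} The admissible $n$ form an initial segment, because the $g_n$ increase. By monotone convergence the admissible $g_n$ therefore increase to a limit whose integral is at most $2C$, and halving gives $\expectE_\probP f\le C$. To get the bound $1$ instead of $C$, we replace the threshold $2C$ by $2$ and the prefactor $\frac12$ by $\frac{1}{2C}$, that is, we define $f$ from $g_n/C$. This is the same argument with rescaled constants.

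\textbf{(c) For $\probP=\probP_\theta$ the integral condition never fails.} Continuous $e$-variable-approximability, applied with $e_s=\uut(\cdot\mid\probP_s,y)$, gives $\expectE_\theta G(X,y)\le C$. Since $g_n\le G$, every $n$ satisfies $\int g_n\,{\rm d}\probP_\theta\le C<2C$. Hence $f(x\mid\probP_\theta,y)=\frac{1}{2C}G(x,y)$, which is the claimed $=^\times$ relation.

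The step I expect to be the main obstacle is (a) together with the approximation in the first step. We must produce the $g_n$ uniformly from the l.s.c. description of $G$, which involves two nested enumerations (over $s$ and over approximations of each factor). The integrals $\int g_n(z,y)\,\probP({\rm d}z)$ must also be computable from Prokhorov approximations of $\probP$ on a non-compact space such as $\reals$. My plan is to use that the $g_n$ are bounded and Lipschitz with explicit constants: then $|\int g_n\,{\rm d}\probP-\int g_n\,{\rm d}\probP_\varepsilon|\le(\mathrm{Lip}+2\sup g_n)\varepsilon$, which makes the strict inequality semidecidable. The remaining parts are routine bookkeeping, as in the discrete lemma.
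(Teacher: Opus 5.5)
Your proposal is correct and follows essentially the paper's route: the paper obtains this lemma from the trimming theorem (Theorem 4.1.1 of G\'acs' lecture notes) applied to the l.s.c.\ function $\sum_{s\in\mathcal{S}}\uut(x\mid\probP_s,y)\hat s(x,s)$, which is precisely the construction $\frac12\sup_n\{g_n \mid \int g_n\,{\rm d}\probP<2C\}$ from the sketch of Lemma~\ref{lem:semicomp-test-discrete} that you carry out explicitly, with continuous $e$-variable-approximability guaranteeing that no truncation occurs at $\probP=\probP_\theta$. Your rescaling by $1/C$ in step (b) is a worthwhile fix, since the paper's sketch as written only bounds the expectation by $C$ rather than by $1$.
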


\begin{lemma}\label{lem:distr-shift}
    Let $\mathcal{P}=\{\probP_\theta\}_{\theta\in\reals}$ be either $\{\mathcal{N}(\theta,1)\}_{\theta\in\reals}$  or $\{ \Cauchy(\theta,1)\}_{\theta\in\reals}$. Let $n$ be an integer and $x \in [n, n+1]$. Then
    \[
        \uut(x \mid \probP_n, y) =^\times \uut(x \mid \probP_{n+1}, y).
    \]
\end{lemma}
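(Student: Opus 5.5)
By symmetry it is enough to prove $\uut(x \mid \probP_{n+1}, y) \ge^\times \uut(x \mid \probP_n, y)$ for $x\in[n,n+1]$, with a constant independent of $n$, $x$, $y$; the reverse inequality follows by the same argument with the roles of $n$ and $n+1$ swapped. The approach is the standard one for such comparisons. We build from $\uut(\cdot\mid\probP_n,y)$ a new uniform randomness test that, evaluated at $\probP_{n+1}$ and $x\in[n,n+1]$, is $\ge^\times \uut(x\mid\probP_n,y)$. Universality of $\uut$ then gives the claim.

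\textbf{Key step: a density-ratio bound.} For both families, the density ratio $p_{n+1}(x)/p_n(x)$ is bounded below by a positive constant on a neighbourhood of $[n,n+1]$, uniformly in $n$. We fix the neighbourhood $[n-1,n+2]$.
\begin{itemize}
\item \emph{Normal case.} The ratio is $\exp(x-n-\tfrac12)$, which is at least $\e^{-3/2}$ there.
\item \emph{Cauchy case.} The ratio is $(1+(x-n)^2)/(1+(x-n-1)^2)$, which is at least $1/5$ there.
\end{itemize}
Call this constant $c$. Let $\phi_n$ be a continuous, computable cutoff, piecewise linear with values in $[0,1]$, equal to $1$ on $[n,n+1]$ and supported in $[n-1,n+2]$. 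For any $\probP_\theta$, $\phi_n\,\uut(\cdot\mid\probP_n,y)$ then has $\probP_{n+1}$-expectation at least $c\,\expectE_{\probP_n}$ of the same function. But we need the opposite direction: the $\probP_{n+1}$-expectation should be at most $c^{-1}$ times the $\probP_n$-expectation. So we use the reverse bound $p_n/p_{n+1}\le c^{-1}$ on $[n-1,n+2]$, which holds for the same reasons. This gives
\[
\expectE_{\probP_{n+1}}\bigl[\phi_n\,\uut(X\mid\probP_n,y)\bigr]\le c^{-1}\,\expectE_{\probP_n}\bigl[\uut(X\mid\probP_n,y)\bigr]\le c^{-1}.
\]

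\textbf{Turning this into a uniform test.} The inequality above only constrains the test at the distribution $\probP_{n+1}$, whereas a uniform test must be an $e$-variable under every $\probP$. We therefore define, for an arbitrary measure $\probP$,
\[
f(x\mid\probP,\langle n,y\rangle)=\tfrac12\sup_k\Bigl\{g_k(x)\Bigm| \textstyle\int g_k\,{\rm d}\probP< 2c^{-1}\Bigr\},
\]
where the $g_k$ are finitely described piecewise-linear functions increasing to $\phi_n(x)\,\uut(x\mid\probP_n,y)$. This is the same trimming construction as in the proof of Lemma~\ref{lem:semicomp-test-discrete}. Three properties are needed:
\begin{itemize}
\item \emph{Lower semicomputability.} This uses computability of $n\mapsto\probP_n$ and of $\phi_n$.
\item \emph{Validity.} The expectation of $f$ under every $\probP$ is at most $1$, possibly after rescaling by a fixed constant.
\item \emph{No loss at $\probP_{n+1}$.} At $\probP=\probP_{n+1}$ the integral condition never cuts anything off, so $f(x\mid\probP_{n+1},\langle n,y\rangle)=^\times\phi_n(x)\,\uut(x\mid\probP_n,y)$.
\end{itemize}

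\textbf{Main obstacle: removing the extra condition $n$.} The test $f$ carries $n$ in its condition, while the statement has condition $y$ alone. To get rid of $n$, we recover it from the data rather than pass it as condition. Take $f$ with condition $y$ only and sum over $n$ with weights $\phi_n(x)$. At a given $x$ only three cutoffs are nonzero, so the sum has at most three relevant terms, and the expectation bound grows by at most a factor of $3$.

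The delicate point is that $\uut(\cdot\mid\probP_n,y)$ is an $e$-variable only under $\probP_n$. Each summand must therefore be trimmed separately, before summing, for the bound under $\probP_{n+1}$ to persist. After summing, the test has expectation at most $3$ under every $\probP$ (dividing by $3$ restores validity), and at $\probP_{n+1}$ it dominates $\tfrac13\,\phi_n(x)\,\uut(x\mid\probP_n,y)$. Applying universality then yields $\uut(x\mid\probP_{n+1},y)\ge^\times\uut(x\mid\probP_n,y)$ on $[n,n+1]$, which completes the argument.
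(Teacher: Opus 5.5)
Your density-ratio estimate and the intermediate test $f(x\mid\probP,\langle n,y\rangle)$ are fine. One slip: the bound you need is $p_{n+1}/p_n\le c^{-1}$ on $[n-1,n+2]$, not $p_n/p_{n+1}\le c^{-1}$; both hold by the reflection $x\mapsto 2n+1-x$.

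The gap is in the step you flag as the main obstacle. Write $T_n(\cdot\mid\probP,y)$ for the separately trimmed version of $\phi_n\,\uut(\cdot\mid\probP_n,y)$. Trimming against a fixed threshold only gives a bound of order $c^{-1}$ on $\expectE_\probP T_n$ for each $n$ \emph{separately}. The fact that at most three $\phi_n(x)$ are nonzero at each $x$ bounds $\sum_n\phi_n(x)$ pointwise. It gives no control of $\expectE_\probP\sum_n T_n=\sum_n\expectE_\probP T_n$, which is a sum of infinitely many budgets of size about $c^{-1}$, because nothing ties $\expectE_\probP T_n$ to the $\probP$-mass of $[n-1,n+2]$. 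Take a $\probP$ spread over many unit intervals and sitting at points where $\uut(\cdot\mid\probP_n,y)$ is huge (e.g.\ computable points, which are non-random for the non-atomic $\probP_n$). Then each trimmed summand can use up its budget, and the expectation of the sum is unbounded. So the claim ``expectation at most $3$ under every $\probP$'' is false, and dividing by $3$ does not give a uniform test. Putting weights $w_n$ with $\sum_n w_n\le 1$ in front of the summands would restore validity. But then the value at $\probP_{n+1}$ loses the factor $w_n$, which is exactly the non-uniformity in $n$ you were trying to remove.

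The paper avoids this by recovering $n$ from $\probP$ rather than from $x$. The map $\probP\mapsto\expectE_\probP\arctan X$ is computable and, on $\mathcal{P}$, strictly increasing in the location, so inverting it gives a computable $\hat\theta(\probP)$ with $\hat\theta(\probP_\theta)=\theta$. The single function $\phi_{\hat\theta(\probP)-1}(x)\,\uut(x\mid\probP_{\hat\theta(\probP)-1},y)$ is then l.s.c.\ in $(x,\probP,y)$ with condition $y$ alone. Your trimming and density-ratio bound then apply verbatim at $\probP=\probP_{n+1}$.

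If you want to keep the data-based sum, there are two repairs.
\begin{enumerate}
\item Trim the whole sum once. This needs the uniform bound $\sup_\theta\expectE_{\probP_\theta}\sum_n\phi_n(X)\,\uut(X\mid\probP_n,y)<\infty$. It follows from a summable decay of $\sup_{[n-1,n+2]}p_\theta/p_n$ in $|\theta-n|$: at most $\exp(2|\theta-n|-(\theta-n)^2/2)$ for the normal family, and at most $5/(1+(|\theta-n|-2)_+^2)$ for Cauchy. This is continuous $e$-variable approximability with your wider cutoffs, as in Theorem~\ref{thm:contin-e-var} and Lemma~\ref{lem:semicomp-test-contin}.
\item Trim the $n$-th summand against a threshold proportional to $\int\psi_n\,{\rm d}\probP$, for a slightly wider computable cutoff $\psi_n$. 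Then the budgets themselves have bounded overlap, while by translation invariance the threshold at $\probP_{n+1}$ is still a fixed positive constant.
\end{enumerate}
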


This lemma can be proven using the same technique as lemma \ref{lem:semicomp-test-discrete} after noticing that there is a computable function on distribution that on the elements of $\mathcal{P}$ recovers the parameter. An example of such function would be a function that computes $\expectE_\probP \arctan(X)$ for given $\probP$ and recovers the would-be parameter by assuming that $\probP$ is from $\mathcal{P}$.

\begin{theorem}\label{thm:rand-def-norm-cauchy}
    Let $\mathcal{P}=\{\probP_\theta\}_{\theta\in\reals}$ be either $\{\mathcal{N}(\theta,1)\}_{\theta\in\reals}$ or $\{\Cauchy(\theta,1)\}_{\theta\in\reals}$. Then
    \[
        \uut^{\mathcal{P}}(x\mid y) =^\times \uut(x \mid \probP_{\lfloor x\rfloor}, y).
    \]
\end{theorem}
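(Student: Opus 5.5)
The plan follows the proof of Theorem \ref{thm:rand-def-of-a-family}. Lemma \ref{lem:semicomp-test-contin} and Theorem \ref{thm:contin-e-var} replace Lemma \ref{lem:semicomp-test-discrete}, and Lemma \ref{lem:distr-shift} collapses the interpolated sum into a single term. Fix $\varepsilon = 1/5$. By Theorem \ref{thm:contin-e-var}, $\mathcal{P}$ is continuously $e$-variable-approximable from the net $\mathcal{S}=\intnums$ with the partition of unity
\[
\hat s(x,n)=\mathbb{I}^{2\varepsilon}_{\left[n-\frac{1}{2}+\varepsilon,\, n+\frac{1}{2}-\varepsilon\right]}(x).
\]
Three facts need checking. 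The net $\intnums$ is uniformly computable. The functions $\hat s(\cdot,\cdot)$ are uniformly l.s.c.; they are in fact uniformly computable, being explicit piecewise linear ramps with rational breakpoints. The family is computable, which is standard for the normal and Cauchy families.

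Lemma \ref{lem:semicomp-test-contin} then gives a uniform test $f$ with
\[
f(x\mid \probP_\theta,y)=^\times \sum_{n\in\intnums}\uut(x\mid\probP_n,y)\,\hat s(x,n)
\]
for every $\theta\in\reals$. Next we simplify the right-hand side. For a given $x$, at most two indices $n$ have $\hat s(x,n)\neq 0$, and these indices lie in $\{\lfloor x\rfloor,\lfloor x\rfloor+1\}$ or in $\{\lfloor x\rfloor-1,\lfloor x\rfloor\}$. Any such $n$ satisfies $|n-\lfloor x\rfloor|\le 1$, and $x$ lies in $[\lfloor x\rfloor,\lfloor x\rfloor+1]$. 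Hence Lemma \ref{lem:distr-shift}, applied once on the appropriate unit interval, gives $\uut(x\mid\probP_n,y)=^\times\uut(x\mid\probP_{\lfloor x\rfloor},y)$ for each such $n$. One should double check that the index $n=\lfloor x\rfloor-1$ only occurs when $x$ is within $1/2$ of $\lfloor x\rfloor$, so that $x\in[n,n+1]$ with $n=\lfloor x\rfloor -1$ is not needed. In fact $x\in[\lfloor x\rfloor,\lfloor x\rfloor+1]$ is exactly what Lemma \ref{lem:distr-shift} requires for the pair $(\lfloor x\rfloor,\lfloor x\rfloor+1)$. For the pair $(\lfloor x\rfloor-1,\lfloor x\rfloor)$ we need $x\in[\lfloor x\rfloor-1,\lfloor x\rfloor]$, which fails for non-integer $x$. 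This pair therefore requires a small extension of Lemma \ref{lem:distr-shift} to $x\in[n-1/2,n+3/2]$, or more simply to $|x-n|\le 2$. Its proof is identical: shifting by one unit changes the density ratio by a bounded factor on a bounded window. I expect this bookkeeping, together with checking that the constants are uniform in $n$, to be the only delicate point. Since the weights $\hat s(x,n)$ sum to $1$, we obtain
\[
f(x\mid\probP_\theta,y)=^\times\uut(x\mid\probP_{\lfloor x\rfloor},y)\quad\text{for all }\theta.
\]

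The argument from Theorem \ref{thm:rand-def-of-a-family} now repeats verbatim. For the upper bound, $\uut^{\mathcal P}(x\mid y)\le \uut(x\mid\probP_{\lfloor x\rfloor},y)$ holds trivially because $\probP_{\lfloor x\rfloor}\in\mathcal P$. For the lower bound, universality of $\uut$ gives
\[
\uut(x\mid\probP_\theta,y)\ge^\times f(x\mid\probP_\theta,y)=^\times\uut(x\mid\probP_{\lfloor x\rfloor},y).
\]
All constants here are independent of $x,\theta,y$, so taking the infimum over $\theta$ yields $\uut^{\mathcal P}(x\mid y)\ge^\times\uut(x\mid\probP_{\lfloor x\rfloor},y)$, which concludes the proof.
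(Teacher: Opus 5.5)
Your proposal is correct and follows the paper's proof. You use the test $f$ from Lemma~\ref{lem:semicomp-test-contin} with the interpolating weights of Theorem~\ref{thm:contin-e-var}, Lemma~\ref{lem:distr-shift} to identify the surviving terms with $\uut(x\mid\probP_{\lfloor x\rfloor},y)$, and the universality argument of Theorem~\ref{thm:rand-def-of-a-family}; you collapse the sum before taking the infimum over $\theta$ and the paper does it after, which makes no difference.

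Your worry about the index $\lfloor x\rfloor-1$ is vacuous. The weight $\hat s(\cdot,m)$ vanishes outside $[m-\frac{1}{2}-\varepsilon,\,m+\frac{1}{2}+\varepsilon]$ and $\varepsilon<\frac{1}{2}$, so only $m\in\{\lfloor x\rfloor,\lfloor x\rfloor+1\}$ ever contribute. Lemma~\ref{lem:distr-shift} therefore suffices as stated, and no extension is needed.
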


\begin{proof}
    Let $f$ be the uniform test from lemma \ref{lem:semicomp-test-contin} and $\hat s(x, n) = \mathbb{I}^{0.2}_{[n-0.4, n+0.4]}(x)$ in accordance with theorem \ref{thm:contin-e-var}. Then, just as in the proof of theorem \ref{thm:rand-def-of-a-family} we can claim that
    \[
        \uut^{\mathcal{P}}(x\mid y) =^\times \uut(x \mid \probP_{\lfloor x \rfloor}, y) \hat s(x, \lfloor x \rfloor) + \uut(x \mid \probP_{\lfloor x \rfloor +1}, y) \hat s(x, \lfloor x \rfloor +1).
    \]

    Now, from lemma \ref{lem:distr-shift} we obtain that
    \[
        \uut(x \mid \probP_{\lfloor x \rfloor}, y) \hat s(x, \lfloor x \rfloor) + \uut(x \mid \probP_{\lfloor x \rfloor +1}, y) \hat s(x, \lfloor x \rfloor +1) \ge^\times \uut(x \mid \probP_{\lfloor x \rfloor}, y),
    \]
    and so
    \[
        \uut^{\mathcal{P}}(x \mid y) \ge^\times \uut(x \mid \probP_{\lfloor x \rfloor}, y) \ge \uut(x \mid y).
    \]
\end{proof}





\section{Discussions}\label{sec:discs}

\subsection{Discussion of definition \ref{dfn:e-val-approx}}\label{sec:dfn-arguments}

In Section \ref{sec:e-var-approx} we introduced the notion of $e$-variable approximability.
This notion might not seem very intuitive, so let us consider a few examples when seemingly more intuitive ways to combine $e$-variables fail.

First, we will demonstrate that using a simple $\inf$-projection ($x \mapsto \inf_{\theta}e_\theta(x)$) might lead to non-measurable functions. 

\begin{example}
    Let us consider a family of normal distributions $\{\mathcal{N}(\mu,1)\}_{\mu\in\reals}$ and some non-measurable set $L \subset \reals$. For each $\mu\in L$ let $e_\mu(x) = \mathbb{I}\{x\ne \mu\}$, and for all $\mu \notin L$ let $e_{\mu} \equiv 1$. Then all the functions $e_\mu$ are measurable with $\expectE_\mu e_\mu(X)=1$, but the function $\inf_\mu e_\mu(x) = \mathbb{I}\{x\notin L\}$ is not measurable and thus cannot serve as a statistical test.
\end{example}

\begin{remark}
    Taking an infimum over a countable set would preserve measurability, but it is weaker (up to a multiplicative constant) than our proposed approach and, what is also important for us, does not preserve lower semi-computability. 
\end{remark}

The next two example shows that for non-trivial families we cannot use maximum likelihood estimator for $\hat{s}$

\begin{example}
    Let us consider a family of normal distributions $\{\mathcal{N}(\mu,1)\}_{\mu\in\reals}$ and an arbitrary constant $C>)$. For each value of $\mu$ we define a function $e_\mu(x) = C\cdot10^{100} \mathbb{I}\{x=\mu\}$. Then, obviously, each $e_\mu$ is measurable and $\expectE_{\mu}e_{\mu}(X) =0\le1$, hence all $e_\mu$ are $e$-variables, but at the same time for all $\mu$ we have $\expectE_\mu \frac{1}{C}e_{\operatorname{MLE}(x)}(x)=\frac{1}{C}\expectE_\mu e_x(x) = 10^{100} > 1$.
\end{example}

\begin{example}
    Let us consider a family of Poisson distributions $\{\Pois(\lambda)\}_{\lambda>0}$, then $\operatorname{MLE}(x)=x$. For each natural number $n$ define a function $e_n(x) = \frac{1}{\Pois(n)(\{n\})} \mathbb{I}\{x=n\}=\exp(n)\frac{n!}{n^n}\mathbb{I}\{x=n\}$. Then each $e_n$ is measurable and $\expectE_{n}e_{n}(X) = 1$, hence all $e_n$ are $e$-variables, but at the same time for all $\lambda>0$ we have $\expectE_\lambda e_{\operatorname{MLE}(X)}(X)=\expectE_\lambda e_X(X) = \exp(X) \frac{X!}{X^X} > 1$.

    Moreover, since $\exp(x) \frac{x!}{x^x} \sim \sqrt{2\pi x}$, for any $C>0$ there will be a large enough $\lambda$ for which $\expectE_\lambda \frac{1}{C}e_X(X) > 1$.
\end{example}

\subsection{Further research}

There are several directions which we think are worth investigating further.

In theorem \ref{thm:most-e-var-results} we see that there may be many different nets for the same family of distributions and even many different estimators for the same net. This naturally leads to the question of characterising the nets and finding which nets and estimators allow the optimal normalising factor.

Estimators in our work are based on the idea that the data represents information about the distribution that is distorted by random noise, so we need to ``trim'' the data in order to reduce the impact of the noise while leaving useful information. This view is closely related to the Minimum Description Length principle, and it might be interesting to explore further how theory behind MDL principle can be used to improve our results.

\printbibliography

@article{RamdasWang2025,
url = {http://dx.doi.org/10.1561/3600000002},
year = {2025},
volume = {1},
journal = {Foundations and Trends® in Statistics},
title = {Hypothesis Testing with E-values},
doi = {10.1561/3600000002},
issn = {2978-4212},
number = {1-2},
pages = {1-390},
author = {Aaditya Ramdas and Ruodu Wang}
}

@misc{gacs2021lecturenotes,
      title={Lecture notes on descriptional complexity and randomness}, 
      author={Peter Gacs},
      year={2021},
      eprint={2105.04704},
      archivePrefix={arXiv},
      primaryClass={cs.IT},
      url={https://arxiv.org/abs/2105.04704}, 
}

@inproceedings{levin1976uniform,
  title={Uniform tests of randomness},
  author={Levin, Leonid Anatolevich},
  booktitle={Doklady Akademii Nauk},
  volume={227},
  number={1},
  pages={33--35},
  year={1976},
  organization={Russian Academy of Sciences}
}

@article{LEVIN198415,
title = {Randomness conservation inequalities; information and independence in mathematical theories},
journal = {Information and Control},
volume = {61},
number = {1},
pages = {15-37},
year = {1984},
issn = {0019-9958},
doi = {https://doi.org/10.1016/S0019-9958(84)80060-1},
url = {https://www.sciencedirect.com/science/article/pii/S0019995884800601},
author = {Leonid A. Levin}
}

@article{GACS200591,
title = {Uniform test of algorithmic randomness over a general space},
journal = {Theoretical Computer Science},
volume = {341},
number = {1},
pages = {91-137},
year = {2005},
issn = {0304-3975},
doi = {https://doi.org/10.1016/j.tcs.2005.03.054},
url = {https://www.sciencedirect.com/science/article/pii/S030439750500188X},
author = {Peter Gács}
}

@article{VovkVyuginBayes,
    author = {Vovk, V. G. and V'Yugin, V. V.},
    title = "{On the Empirical Validity of the Bayesian Method}",
    journal = {Journal of the Royal Statistical Society: Series B (Methodological)},
    volume = {55},
    number = {1},
    pages = {253-266},
    year = {2018},
    month = {12},
    issn = {0035-9246},
    doi = {10.1111/j.2517-6161.1993.tb01482.x},
    url = {https://doi.org/10.1111/j.2517-6161.1993.tb01482.x},
    eprint = {https://academic.oup.com/jrsssb/article-pdf/55/1/253/49173327/jrsssb\_55\_1\_253.pdf},
}

@incollection{shen,
  TITLE = {{Randomness Tests: Theory and Practice}},
  AUTHOR = {Shen, Alexander},
  URL = {https://hal-lirmm.ccsd.cnrs.fr/lirmm-03065320},
  BOOKTITLE = {{Fields of Logic and Computation III}},
  EDITOR = {Blass A. and C{\'e}gielski P. and Dershowitz N. and Droste M. and Finkbeiner B.},
  PUBLISHER = {{Springer-Verlag}},
  SERIES = {Lecture Notes in Computer Science},
  VOLUME = {12180},
  PAGES = {258-290},
  YEAR = {2020},
  MONTH = May,
  DOI = {10.1007/978-3-030-48006-6\_18},
  HAL_ID = {lirmm-03065320},
  HAL_VERSION = {v1},
urldate={2024-05-28},
}

@article{Kolmogorov1968LogicalBF,
  title={Logical basis for information theory and probability theory},
  author={Andrei N. Kolmogorov},
  journal={IEEE Trans. Inf. Theory},
  year={1968},
  volume={14},
  pages={662-664},
  url={https://api.semanticscholar.org/CorpusID:11402549},

urldate={2024-05-28},
}

@misc{vovk2016concept,
      title={On the concept of Bernoulliness}, 
      author={Vladimir Vovk},
      year={2016},
      eprint={1612.08859},
      archivePrefix={arXiv},
      primaryClass={math.ST},
 urldate = {2024-05-28},
}

@incollection{vovk2020nonalgorithmic,
  title={Non-algorithmic theory of randomness},
  author={Vovk, Vladimir},
  booktitle={Fields of Logic and Computation III: Essays Dedicated to Yuri Gurevich on the Occasion of His 80th Birthday},
  pages={323--340},
  year={2020},
  publisher={Springer}
}

@article{Bienvenu2011AlgorithmicTA,
  title={Algorithmic tests and randomness with respect to a class of measures},
  author={Laurent Bienvenu and P{\'e}ter G{\'a}cs and Mathieu Hoyrup and Cristobal Rojas and Alexander Shen},
  journal={Proceedings of the Steklov Institute of Mathematics},
  year={2011},
  volume={274},
  pages={34-89},
  url={https://api.semanticscholar.org/CorpusID:3178723}
}

@article{VovkLearningBernoulli199796,
title = {Learning about the Parameter of the Bernoulli Model},
journal = {Journal of Computer and System Sciences},
volume = {55},
number = {1},
pages = {96-104},
year = {1997},
issn = {0022-0000},
doi = {https://doi.org/10.1006/jcss.1997.1502},
url = {https://www.sciencedirect.com/science/article/pii/S0022000097915026},
author = {V.G Vovk}
}

@article{VovkWang2021Calibration,
author = {Vladimir Vovk and Ruodu Wang},
title = {{E-values: Calibration, combination and applications}},
volume = {49},
journal = {The Annals of Statistics},
number = {3},
publisher = {Institute of Mathematical Statistics},
pages = {1736 -- 1754},
year = {2021},
doi = {10.1214/20-AOS2020},
URL = {https://doi.org/10.1214/20-AOS2020}
}

@book{weihrauch2000computable,
  title={Computable Analysis: An Introduction},
  author={Weihrauch, K.},
  isbn={9783540668176},
  lccn={lc00056310},
  series={Texts in Theoretical Computer Science. An EATCS Series},
  url={https://books.google.co.uk/books?id=OPolVWVFDJYC},
  year={2000},
  publisher={Springer Berlin Heidelberg}
}

@article{GrunwaldSafeTesting,
    author = {Grünwald, Peter and de Heide, Rianne and Koolen, Wouter},
    title = {Safe testing},
    journal = {Journal of the Royal Statistical Society Series B: Statistical Methodology},
    volume = {86},
    number = {5},
    pages = {1091-1128},
    year = {2024},
    month = {03},
    issn = {1369-7412},
    doi = {10.1093/jrsssb/qkae011},
    url = {https://doi.org/10.1093/jrsssb/qkae011},
    eprint = {https://academic.oup.com/jrsssb/article-pdf/86/5/1091/60648648/qkae011.pdf},
}

@ARTICLE{LardyReverseInfProj,
  author={Lardy, Tyron and Grünwald, Peter and Harremoës, Peter},
  journal={IEEE Transactions on Information Theory}, 
  title={Reverse Information Projections and Optimal E-Statistics}, 
  year={2024},
  volume={70},
  number={11},
  pages={7616-7631},
  doi={10.1109/TIT.2024.3444458}}

\appendix

\section{Notaion}

\begin{itemize}
    \item $\e$ stands for the constant $\e=2.71\ldots$ to distinguish from $e$, which we will use in namings of $e$-variables
    \item $|A|$ and $\#A$ are interchangeable and stand for the cardinality of a set $A$
    \item $|x|$ stands for the absolute value of a number $x$
    \item $\|x\|$ stands for the Euclidean norm of a point $x \in \reals^n$
    \item $x_{1:n}$ stands for a point in $\reals^n$ with coordinates $x_1, x_2, \ldots, x_n$.
    \item $a \le^+ b$ stands for $a \le b + C$ for some numerical constant $C$, similarly for $\ge^+$
    \item $a \le^\times b$ stands for $a \le Cb$ for some numerical constant $C>0$, similarly for $\ge^\times$
    \item $a =^+ b$ means that both $a \le^+ b$ and $b \le^+ a$ hold, similarly for $a =^\times b$
    \item $[l:r]$ stands for the set $\{n \in\intnums \mid l \le n \le r\}$
    \item $A^\varepsilon$ is the $\varepsilon$-neighbourhood of a set $A$
    \item $\mathbb{I}_A$ stands for the indicator function of a set $A$
    \item $\mathbb{I}_A^\varepsilon$ stands for the bounded continuous approximation of $\mathbb{I}_A$ which is $1$ on $A$, $0$ on the complement of $A^\varepsilon$ and linearly interpolates between $1$ and $0$ on $A^\varepsilon\setminus A$
    \item $\operatorname{round}_{\mathcal{S}}$ stands for the function that maps a real number to the closest element of a discrete set $\mathcal{S}\subset\reals$; when there are two such elements, $\operatorname{round}_\mathcal{S}$ could be either of those for the purposes of our paper
    \item $\kl(\theta_1\|\theta_2)$ stands for the Kullback–Leibler divergence $\KL(\probP_{\theta_1}\|\probP_{\theta_2})$ when the family $\{\probP_\theta\}_{\theta\in\Theta}$ is uniquely determined by the context
    \item $\expectE_\theta$ stands for $\expectE_{\probP_\theta}$ when the family $\{\probP_\theta\}_{\theta\in\Theta}$ is uniquely determined by the context
    \item $\mathfrak{P}(\mathcal{X})$ stands for the set of all probability measures on $\mathcal{X}$
    \item $\uut$ denotes a universal uniform randomness test, arbitrarily chosen and fixed for this paper
    \item the index set is always specified in an indexed set notation to prevent confusion with singleton sets, that is, notation $\{\probP_\theta\}$ always stands for a singleton set and is never an abbreviation of $\{\probP_\theta\}_{\theta\in\Theta}$
\end{itemize}

\section{Proofs about $e$-variable-approximability}

\subsection{Proofs of lemmas}

\begin{proof}[Proof of lemma \ref{lem:4-cond-techn-lemma}]
    We want to show that $\frac{1}{C}e_{\hat s (X)}(X)$ is an $e$-variable. That is, we need to show that, for all $\theta \in \Theta$
    \[
        \int_{\reals^n} e_{\hat s(x)}(x) \probP_\theta({\rm d}x)\leq C.
    \]
    First, for each $s \in \mathcal{S}$ we denote $\mathcal{X}_s = \hat s^{-1}(s)\subseteq \reals^n$. Now,
    \begin{align*}
        \int_{\reals^n} e_{\hat s(x)}(x) \probP_\theta({\rm d}x) 
        &=
        \sum_{s\in\mathcal{S}} \int_{\mathcal{X}_s}e_s(x) p_{\theta}(x) {\rm d}x
        \\ &=
        \sum_{s\in\mathcal{S}} \int_{\mathcal{X}_s}
            \frac{p_\theta(x)}{p_s(x)}
            e_s(x)p_s(x){\rm d}x
        \\ &\overset{\text{by \ref{prop:loglike-to-d}}}{=}
        \sum_{s\in\mathcal{S}} \int_{\mathcal{X}_s}
            \exp\Big[d(g(x)\|s) - d(g(x)\|\theta)\Big]
            e_s(x)p_s(x){\rm d}x 
        \intertext{denote 
            $C_{s}=C_{s,\theta}   =\sup_{x\in\mathcal{X}_s}d(g(x)\|s) - d(g(x)\|\theta)$}
        &\le\sum_{s\in\mathcal{S}}
            \exp(C_s)\int_{\mathcal{X}_s}
            e_s(x)p_s(x){\rm d}x 
        \\ &\le\sum_{s\in\mathcal{S}}
            \exp(C_s)\underbrace{\int_{\reals^n}e_s(x)p_s(x){\rm d}x}_{\le 1,\textrm{ by def. of $e$-var.}}
        \\ &\le\sum_{s\in\mathcal{S}}
            \exp(C_s).
    \end{align*}
    We now have to estimate the sum $\sum_{s\in\mathcal{S}}
    \exp(C_s).$

    By \ref{prop:near-bound} we have 
    \[
        C_{s,\theta} \leq c' - \inf_{x\in\mathcal{X}_s} d(g(x)\|\theta),
    \]
    and, from non-negativity of $d$, it follows that
    $
        C_s \leq c'.
    $

    Whenever $s \ge \operatorname{succ}_{\mathcal{S}}^{(k+1)}(\theta)$ for some $k$, then $\operatorname{pred}_{\mathcal{S}}(s) \ge \operatorname{succ}_{\mathcal{S}}^{(k)}(\theta)$, and, by \ref{prop:g-near-s} for all $x \in \mathcal{X}_s$:
    \[
        g(x) > \operatorname{pred}_{\mathcal{S}}(g(x)) \ge \operatorname{pred}_{\mathcal{S}}(s) \ge \operatorname{succ}_{\mathcal{S}}^{(k)}(\theta) > \theta,
    \]
    and $\left|\mathcal{S} \cap (g(x), \theta) \right| \ge k$, and, by \ref{prop:d-grows-fast-enough} we have: 
    \[
        d(g(x)\| \theta) \ge (1+\alpha)\log (k-1), ~~\forall x\in\mathcal{X}_s,
    \]
    and so, in this case $C_{s, \theta} \le c' - (1+\alpha)\log (k-1)$.

    Similarly, we get the same bound for $s \le \operatorname{pred}_{\mathcal{S}}^{(k+1)}(\theta)$.
    
    Putting this all together, we get
    \begin{align*}
        \sum_{s\in\mathcal{S}}
            \exp(C_{s,\theta}) &\le
            \left(\sum_{\substack{s \le \operatorname{pred}_{\mathcal{S}}^{(3)}(\theta) \\ s\in\mathcal{S}}}
            +
            \sum_{\substack{\operatorname{succ}_{\mathcal{S}}^{(3)}(\theta) \le s \\ s\in\mathcal{S}}}
            +
            \sum_{\substack{s\in \mathcal{S} \\ \operatorname{pred}_{\mathcal{S}}^{(3)}(\theta) < s \\ s < \operatorname{succ}_{\mathcal{S}}^{(3)}(\theta)}}
            \right) \exp(C_{s,\theta})
            \\
            &\le \sum_{k=1}^\infty \e^{c'-(1+\alpha)\log k} + \sum_{k=1}^\infty e^{c'-(1+\alpha)\log k} + 5\e^{c'} 
            \\
            &= \e^{c'} \left(5 + 2\sum_{k=1}^{\infty} \frac{1}{k^{1+\alpha}}\right) 
            \\
            &\le
            e^{c'} \left(5 + 2\left(1 + \frac{1}{\alpha}\right)\right)
            \\
            &=
            \exp(c')\left(7 + \frac{2}{\alpha}\right),
    \end{align*}
    which concludes the proof.
\end{proof}

\begin{proof}[Proof of lemma \ref{lem:5-cond-techn-lemma}]
    The proof follows the proof of lemma \ref{lem:4-cond-techn-lemma} exactly until after the establishing that
    $
        C_{s,\theta} \leq c' - \inf_{x\in\mathcal{X}_s} d(g(x)\|\theta),
    $
    and
    $
        C_s \leq c'.
    $
    
    From \ref{prop:obtuse-pyth} and non-negativity of $d$ we also have that, for any $\theta_1' \le \theta_1 \le \theta_2 \le \theta_2'$:
    \[
        d(\theta_1\|\theta_2) \le d(\theta_1' \| \theta_2'),~~
        d(\theta_2\|\theta_1)\le d(\theta_2'\|\theta_1').
    \]

    For $\operatorname{succ}_{\mathcal{S}}(s)<\operatorname{pred}_{\mathcal{S}}(\theta)\le\theta$, we have, from \ref{prop:g-near-s} and repeated application of \ref{prop:obtuse-pyth}:
    \begin{align*}
        d(g(x)\|\theta) &\ge d(\operatorname{succ}_{\mathcal{S}}(g(x))\|\operatorname{pred}_{\mathcal{S}}(\theta))
        \\
        &\ge\sum_{\substack{
                t\in \mathcal{S}\\
                \operatorname{succ}_{
                \mathcal{S}}(g(x))\le t\\ t < \operatorname{pred}_{\mathcal{S}}(\theta)
                }} 
        d(t\|\operatorname{succ}_{\mathcal{S}}(t))
        \\
        &\overset{\ref{prop:fixed-steps}}{\ge} \sum_{\substack{
                t\in \mathcal{S}\\
                \operatorname{succ}_{
                \mathcal{S}}(g(x))\le t\\ t < \operatorname{pred}_{\mathcal{S}}(\theta)
                }} c
        \\
        &\overset{\ref{prop:g-near-s}}{\ge} \sum_{\substack{
                t\in \mathcal{S}\\
                \operatorname{succ}_{
                \mathcal{S}}(s)\le t\\ t < \operatorname{pred}_{\mathcal{S}}(\theta)
                }} c
        \\
        &\ge \left(|\mathcal{S}\cap (s,\theta)|-1\right)c.
    \end{align*}

    Similarly, for $\theta \le \operatorname{succ}_{\mathcal{S}}(\theta)<\operatorname{pred}_{\mathcal{S}}(s)$, we have
    \[
        d(g(x)\|\theta) \ge (|\mathcal{S}\cap (\theta, s)|-1)c.
    \]

    Summarising, we get
    \begin{align*}
        \sum_{s\in\mathcal{S}}
            \exp(C_{s,\theta}) &\le
            \left(\sum_{\substack{s \le \operatorname{pred}_{\mathcal{S}}^{(2)}(\theta) \\ s\in\mathcal{S}}}
            +
            \sum_{\substack{\operatorname{succ}_{\mathcal{S}}^{(2)}(\theta) \le s \\ s\in\mathcal{S}}}
            +
            \sum_{\substack{s\in \mathcal{S} \\ \operatorname{pred}_{\mathcal{S}}^{(2)}(\theta) < s \\ s < \operatorname{succ}_{\mathcal{S}}^{(2)}(\theta)}}
            \right) \exp(C_{s,\theta})
            \\
            &\le \sum_{k=0}^\infty \e^{c'-kc}  + \sum_{k=0}^\infty \e^{c'-kc} + 3\e^{c'}
            \\
            &\le \frac{2\e^{c'+c}}{\e^c - 1} + 3\e^{c'},
            \\
            &=
            \exp(c')\left(
            3 + 2\left(
                    1+\frac{1}{\e^c - 1}
                \right)
            \right)
            \\
            &= \exp(c') \left(5 + \frac{2}{\e^c - 1} \right)
            ,
    \end{align*}
    which concludes the proof.
\end{proof}

\begin{proof}[Proof of lemma \ref{lem:exp-fams-are-good}]
    We will use the fact that for a canonical exponential family 
    \begin{align*}
        \kl(\eta \| \eta') =& (\eta - \eta')\expectE_{\eta}T(X) - (A(\eta) - A(\eta'))\\
        =& (\eta - \eta')\frac{{\rm d}}{{\rm d}\eta}A(\eta) - (A(\eta) - A(\eta')).
    \end{align*}

    1. Notice that
    \begin{align*}
        \log \frac{p_\eta(x)}{p_{\eta'}(x)} 
        =& \log \frac{h(x)\exp\left(\eta T(x) - A(\eta)\right)}{h(x)\exp\left(\eta' T(x) - A(\eta')\right)}
        \\
        =& (\eta - \eta')T(x) - (A(\eta) - A(\eta'))\\
        =& (\eta - \eta')\expectE_{\hat\eta (x)}T(X) - (A(\eta) - A(\eta'))\\
        =& (\hat\eta(x) - \eta')\expectE_{\hat\eta (x)} T(X) - (A(\hat\eta(x)) - A(\eta'))\\
        &- \big[ (\hat\eta(x) - \eta)\expectE_{\hat\eta (x)} T(X) - (A(\hat\eta(x)) - A(\eta)) \big]\\
        =& \kl(\hat\eta(x) \| \eta') - \kl(\hat\eta(x) \| \eta)
    \end{align*}

    2. We employ the pythagorean equality for Bregman divergence $\mathcal{B}_{A}(\eta' \| \eta)=\kl(\eta \| \eta')$:
    \begin{align*}
        \kl(\eta_3 \| \eta_1) - \kl(\eta_2 \| \eta_1) - \kl(\eta_3 \| \eta_2) &=
        \mathcal{B}_{A}(\eta_1\|\eta_3) - \mathcal{B}_{A}(\eta_1 \| \eta_2) - \mathcal{B}_{A} (\eta_2 \| \eta_3)
        \\&=(\eta_1 - \eta_2)\left(\left(\frac{\rm d}{{\rm d}\eta}A\right)(\eta_2) - \left(\frac{\rm d}{{\rm d}\eta}A\right)(\eta_3)\right)
    \end{align*}
    Since the cumulant function $A$ is convex, the function $\frac{\rm d}{{\rm d}\eta}A$ is non-decreasing, so, whenever $\eta_1 \le \eta_2 \le \eta_3$ or $\eta_1 \ge \eta_2 \ge \eta_3$, the differences $(\eta_1 - \eta_2)$ and $\left(\left(\frac{\rm d}{{\rm d}\eta}A\right)(\eta_2) - \left(\frac{\rm d}{{\rm d}\eta}A\right)(\eta_3)\right)$ are of the same sign, meaning
    \[
        \kl(\eta_3 \| \eta_1) \ge \kl(\eta_3 \| \eta_2) +\kl(\eta_2 \| \eta_1),
    \]
    which completes the proof.
\end{proof}

\subsection{Proof of Theorem \ref{thm:most-e-var-results}}

\begin{proof}[Proof for uniform distributions]

For both discrete and continuous uniform  distribution families the proof can be done directly. 

For $2^n < x \le 2^{n+1}$ we get $\hat{s}(x)=2^{n+1}$, and

\begin{align*}
    \expectE_N e_{\hat{s}(X)}(X) &= \sum_{x=0}^N \frac{1}{N+1} e_{\hat{s}(x)}(x)\\
    &= \sum_n \sum_{2^n < x \le 2^{n+1}} \frac{1}{N+1} e_{\hat{s}(x)}(x)\\
    &\le \sum_n \frac{2^{n+1}+1}{N+1} \underbrace{\sum_{x\le2^{n+1}}\frac{1}{2^{n+1}+1}e_{2^{n+1}}(x)}_{\expectE_{2^{n+1}}e_{2^{n+1}}(X)\le1}\\
    &\le \sum_n \frac{2^{n+1}+1}{N+1} \le 3,
\end{align*}

Meaning that $\expectE_N \frac{1}{3}e_{\hat{s}(X)}(X) \le 1$ for all $N$.

The proof for continuous case is essentially the same.
\end{proof}

\begin{proof}[Proof for Poisson distributions]
	The density function $p_{\lambda}(n)$ of $\Pois(\lambda)$ is
	\[
		p_\lambda(n) = \e^{-\lambda} \frac{\lambda^n}{n!} = \frac{1}{n!}\exp(n \log \lambda - \lambda).
	\]
	To apply Lemma \ref{lem:exp-fams-are-good}, we are switching to parametrisation $\eta = \log \lambda$, that is, $\expectE_\eta f = \expectE_{\Pois(\e^\eta)} f$.
	
	Notice that $n = \expectE_{\log n} X$, and so by Lemma \ref{lem:exp-fams-are-good} we have \ref{prop:loglike-to-d} and \ref{prop:obtuse-pyth} for $d = \kl$ and $g = \log$.
	
	When parametrising distributions with $\eta$, we have need to work with $\mathcal{S}_\Eta = \{2 \log t\}_{t \in \natnums_{\ge 1}}$.  We also have
	\[
		\kl(\eta_1 \| \eta_2) =
		\e^\eta_1(\eta_1 - \eta_2) - (\e^{\eta_1} - \e^{\eta_2}) = \lambda_1 \log\frac{\lambda_1}{\lambda_2} - (\lambda_1-\lambda_2) = \kl_\lambda(\lambda_1 \| \lambda_2).
	\]
	
	To prove \ref{prop:near-bound}, take any $t^2$ and $n \in \hat{s}^{-1}(t^2)$. Notice that $| n - t^2| \le t$. We have:
	\begin{align*}
		\kl( \log n \| \log t^2) &= \kl_\lambda (n \| t^2)
		\\
		&= n \log \frac{n}{t^2} - (n - t^2)
		\\
		&= n \log \left(1 + \frac{n - t^2}{t^2}\right) - (n - t^2)
		\\
		&\le \left(\frac{n}{t^2} - 1\right) (n - t^2)
		\\
		&= \frac{(n - t^2)^2}{t^2}
		\\
		&\le 1.
	\end{align*}
	
	Property \ref{prop:g-near-s} holds trivially since $\hat{s}$ sends a number to one of the nearest complete squares --- elements of $\mathcal{S}$.
	
	Property \ref{prop:fixed-steps} is proven similarly to \ref{prop:near-bound}.
	
	\begin{align*}
		\kl(\log t^2 \| \log (t+1)^2) &= t^2 \log \frac{t^2}{(t+1)^2} - (t^2 - (t+1)^2)
		\\
		&=-2t^2 \log \left(1 + \frac{1}{t}\right) + 2t + 1
		\\
		&\ge -2t^2 \frac{1}{t} + 2t + 1
		\\&=1.
	\end{align*}
	\begin{align*}
		\kl(\log (t+1)^2 \| \log t^2) &= (t+1)^2 \log \frac{(t+1)^2}{t^2} - ((t+1)^2 - t^2)
		\\
		&= 2(t+1)^2 \log\left(1 + \frac{1}{t}\right) - 2t - 1
		\\
		&\ge 2(t+1)^2 \left(\frac{1}{t} - \frac{1}{2t^2}\right) - 2t - 1
		\\
		&= \frac{(t+1)^2(2t - 1) - (2t+1)t^2}{t^2}
		\\
		&= \frac{2t^2 - 1}{t^2} \ge 1.
	\end{align*}
	
	Since we now have all \ref{prop:loglike-to-d}--\ref{prop:g-near-s}, \ref{prop:obtuse-pyth}, \ref{prop:fixed-steps}, $e$-value approximability of $\{\Pois(\lambda)\}_{\lambda > 0}$ follows from Lemma \ref{lem:5-cond-techn-lemma}.
\end{proof}

\begin{proof}[Proof for normal distributions with fixed variance.]
    Density $p_\mu$ of $\mathcal{N}(\mu,1)^{\otimes n}$ is 
    \[
        p_\mu(x_{1:n}) = \frac{1}{\sqrt{2\pi}}\e^{-\|x_{1:n}^2\|/2} \exp\left( \mu \sum_k x_k - \frac{n\mu^2}{2}\right),
    \]
    and, since $\sum_k x_k = \expectE_{\frac{\sum x_k}{n}}\sum_k X_k$ from Lemma \ref{lem:exp-fams-are-good} we get properties \ref{prop:loglike-to-d} and \ref{prop:obtuse-pyth} for $d(\theta_1\|\theta_2)=\kl(\theta_1\|\theta_2)=\frac{n}{2}(\theta_1-\theta_2)^2$ and $g(x_{1:n}) = \frac{\sum_k x_k}{n}$.

    To prove property \ref{prop:near-bound}, we note that $x_{1:n} \in \hat{s}^{-1}(s)$ means that $\left|\frac{\sum_k x_k}{n} - s\right| \le \frac{\alpha}{2\sqrt{n}}$, and so
    \[
        d(g(x_{1:n})\|s) = \frac{n}{2}(g(x_{1:n})-s)^2 \le \frac{n}{2} \frac{\alpha^2}{4n} = \frac{\alpha^2}{8}.
    \]
    
    Since $\hat{s}(x)=\operatorname{round}_{\mathcal{S}}(g(x))$, the property \ref{prop:g-near-s} is trivial.

    The difference between the consecutive elements $s, s'$ of the net is exactly $\frac{\alpha}{\sqrt{n}}$, so
    \[
        d(s\|s') = \frac{n}{2} \frac{\alpha^2}{n} = \frac{\alpha^2}{2},
    \]
    which establishes property \ref{prop:fixed-steps} and finishes the proof.
\end{proof}

\begin{proof}[Normal distribution with zero mean value]
	We have
	\[
		p_{\sigma^2}(x_{1:n})=\frac{1}{\sqrt{2\pi}}\exp\left(-\frac{\|x_{1:n}\|^2}{2\sigma^2}-\frac{n}{2}\log \sigma^2\right).
	\]
	Since $\|x_{1:n}\|^2 = \sum_k x_k^2$ and all the coordinates are i.i.d., we have $T(x_{1:n})=\expectE_{g(x_{1:n})}T(X_{1:n})$ for $T(x_{1:n}) = -\frac{1}{2}\sum_k x_k^2$ and $g(x_{1:n})= \frac{\|x_{1:n}\|^2}{n}$. 
	                                                                                                                                                                                                                   
	As canonical parametrisation $\eta = 1/\sigma^2$ reverses order, we still get the reverse triangle inequality \ref{prop:obtuse-pyth} when defining $d = \kl$ in terms of variance, and thus can continue to work in standard parametrisation.
	
	For two variances $\sigma^2_1, \sigma^2_2$, we have
	\[
		\kl(\sigma^2_1 \| \sigma^2_2) = \frac{n}{2} \left( \frac{\sigma^2_1}{\sigma^2_2}-\log \frac{\sigma^2_1}{\sigma^2_2} -1 \right).
	\]
	
	Since $\hat{s}(x) = \operatorname{round}_\mathcal{S}(g(x))$, we immediately get \ref{prop:g-near-s}. We also have that for any $s=\hat{s}(x)$ that $g(x)/s \in \left[\frac{\sqrt{n}}{1+\sqrt{n}}, \frac{1+\sqrt{n}}{\sqrt{n}}\right]$,  and for consecutive $s, s' \in \mathcal{S}$ we have $s/s' \in \left\{\frac{\sqrt{n}}{1+\sqrt{n}}, \frac{1+\sqrt{n}}{\sqrt{n}} \right\}$.
	
	Let $h(t)  = \frac{n}{2} \left(t - \log t - 1\right)$, then $\kl(\sigma^2_1 \| \sigma^2_2) = h\left(\sigma^2_1 / \sigma^2_2\right)$.
	
	First, we will show \ref{prop:fixed-steps}. 
	\begin{align*}
		h\left(\frac{1 + \sqrt{n}}{\sqrt{n}}\right) &= \frac{n}{2}\left(1 + \frac{1}{\sqrt{n}} - \underbrace{\log \left(1 + \frac{1}{\sqrt{n}} \right)}_{\le \frac{1}{\sqrt{n}} - \frac{1}{4n}} - 1\right)\\
		&\ge \frac{n}{2} \left(\frac{1}{\sqrt{n}} - \frac{1}{\sqrt{n}} + \frac{1}{4n} \right)\\
		&\ge \frac{1}{8},\\
		h\left(\frac{\sqrt{n}}{1+\sqrt{n}}\right) &= \frac{n}{2}\left(1 - \frac{1}{1+\sqrt{n}} - \underbrace{\log \left(1 - \frac{1}{1+\sqrt{n}} \right)}_{\le -\frac{1}{1+\sqrt{n}} - \frac{1}{4\left(1+\sqrt{n}\right)^2}} - 1\right)\\
		&\ge \frac{n}{2} \left(-\frac{1}{1+\sqrt{n}} + \frac{1}{1 + \sqrt{n}} + \frac{1}{4\left(1 + \sqrt{n}\right)^2}\right)\\
		&\ge \frac{1}{32},
	\end{align*}
	so for any two consecutive elements $s, s'$ of the net we have $d(s \| s') \ge 1/32$.
	
	Now, to show \ref{prop:near-bound}, let $t$ be between $ - \frac{1}{1+\sqrt{n}}$ and $\frac{1}{\sqrt{n}}$.
	\begin{align*}
		h(1+t) &= \frac{n}{2}((1+t) - \underbrace{\log (1+t)}_{\ge t - t^2} - 1)
		\\&\le \frac{n}{2}(t - t + t^2)\\
		&\le \frac{|t|^2 n}{2}
		\\&\le \frac{1}{n}\frac{n}{2} = \frac{1}{2}
	\end{align*}
\end{proof}

\begin{proof}[Proof for Cauchy distributions.]
    For Cauchy distributions, we have $p_\theta(x)=\frac{1}{\pi(1+(x-\theta)^2)}$, and so
    \[
        \log \frac{p_\theta(x)}{p_s(x)} = \log \left(1+(x-s)^2\right) - \log \left(1+(x-\theta)^2\right).
    \]
    Take $g(x)=x$, $d(\theta_1 \| \theta_2) = \log\left(1 + (\theta_1 - \theta_2)^2 \right) \ge \log1=0$.

    For $\hat{s}=r^\varepsilon$, we have that $x \in \hat{s}^{-1}(n) \implies x \in [n-0.5-\varepsilon, n +0.5+\varepsilon]$, so 
    \[
        d(x\|n) \le \log (1+(0.5+\varepsilon)^2)\le\log 2.
    \]

    Property \ref{prop:g-near-s} for $r^\varepsilon$ follows from the same observation in a straightforward manner.

    Now, assume there are $k$ integers between $\theta_1$ and $\theta_2$. Then $|\theta_1 - \theta_2| \ge k-1$ and
    \[ 
    d(\theta_1\|\theta_2) = \log\left(1 + (\theta_1 - \theta_2)^2 \right) \ge 2\log(k-1).
    \]

    The result now follows from Lemma \ref{lem:4-cond-techn-lemma}.
\end{proof}

\begin{proof}[Proof for normal distributions with $r^\varepsilon$.]
    The proof is mostly identical to the proof for $\operatorname{round}_\mathcal{S}$ with a slightly weaker bounds.
\end{proof}

\section{Proofs for randomness tests}
All the lemmas in section \ref{sec:rand-def-res} follow from Theorem 4.1.1. in \cite{gacs2021lecturenotes}, which we here restate.

\begin{theorem}
	Let $\phi_e$, $e\in\natnums$, be an enumeration of all l.s.c. functions on $\mathcal{X}\times\mathfrak{M}(\mathcal{X})\times\mathcal{Y} \to [0,\infty]$. where $\mathfrak{M}(\mathcal{X})$ is the set of all bounded measures on $\mathcal{X}$. Let $b: \mathcal{Y}\times\mathfrak{M}(\mathcal{X})\to [0,\infty]$ be a l.s.c. function. There is a recursive function $e \mapsto e'$ with the following properties.
	
	For each $e$, the function $\phi_{e'}$ is everywhere defined and its integral is bounded by $b(\probP, y)$.
	
	For each $e, y, \probP$, if  $\expectE_\probP \phi_e(X\mid \probP, y) < b(\probP, y)$, then $\phi_{e'}(\cdot \mid \probP, y) =^\times \phi_e(\cdot \mid \probP, y)$.
\end{theorem}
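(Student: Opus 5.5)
The plan is the classical ``trimming'' construction for l.s.c. functions: approximate $\phi_e$ from below by an increasing sequence of nice functions, and keep only those approximants whose integral is already certified to be below $b$. The first step is to fix a representation of l.s.c.\ functions on $\mathcal{X}\times\mathfrak{M}(\mathcal{X})\times\mathcal{Y}$. Every l.s.c.\ function $\phi_e$ is the supremum of a sequence $h^e_1,h^e_2,\ldots$ of bounded, computable, continuous basic functions, for instance finite combinations of ``hat'' functions of the form $r\cdot\max(0,1-\rho(u,B_m)/\delta)$ with rational $r,\delta$. This sequence is uniformly enumerable from $e$ via the enumerable set $W$ in Definition~\ref{dfn:lsc-fun}. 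Replacing $h^e_n$ by $g^e_n=\max(0,h^e_1,\ldots,h^e_n)$, I obtain a uniformly computable, monotonically increasing sequence $0\le g^e_1\le g^e_2\le\cdots$ of bounded continuous functions with $g^e_n\uparrow\phi_e$.

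The second step concerns the integrals. Since each $g^e_n(\cdot\mid\probP,y)$ is bounded and continuous, jointly and computably in all arguments, the map $(\probP,y)\mapsto I_n(\probP,y)=\int g^e_n(x\mid\probP,y)\,\probP({\rm d}x)$ is computable, uniformly in $e,n$. Here I use that integrals of bounded continuous computable functions are computable with respect to the L\'evy--Prokhorov representation of (bounded) measures, via the $\varepsilon$-approximations by finitely supported rational measures. Because $b$ is l.s.c., the set
\[
U_n=\{(\probP,y): I_n(\probP,y)<b(\probP,y)\}
\]
is effectively open, uniformly in $e,n$. I then define
\[
\phi_{e'}(x\mid\probP,y)=\sup_n g^e_n(x\mid\probP,y)\,\mathbb{I}_{U_n}(\probP,y).
\]
Each term is l.s.c., being the product of a nonnegative computable function and the indicator of an effectively open set. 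Hence the supremum is l.s.c., and an index $e'$ for it is obtained from $e$ by a recursive transformation. The function is everywhere defined, with values in $[0,\infty]$.

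The third step checks the two required properties. Monotonicity of $g^e_n$ gives $I_n\le I_{n+1}$, so whenever $(\probP,y)\in U_{n+1}$ we also have $(\probP,y)\in U_n$. Thus for fixed $(\probP,y)$ the admitted indices form an initial segment of $\natnums$. Consequently $\phi_{e'}(\cdot\mid\probP,y)$ is either $g^e_N$ for the last admitted $N$, or the monotone limit of admitted $g^e_n$ each having integral $<b(\probP,y)$. By monotone convergence its integral is at most $b(\probP,y)$ in both cases. If moreover $\expectE_\probP\phi_e<b(\probP,y)$, then every $I_n\le\expectE_\probP\phi_e<b$, so all indices are admitted and $\phi_{e'}(\cdot\mid\probP,y)=\phi_e(\cdot\mid\probP,y)$ exactly. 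This is stronger than the claimed $=^\times$; the multiplicative slack would only be needed if one prefers to use a strict bound like $b/2$ for safety.

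I expect the main obstacle to be the first two steps taken together. One must guarantee that the basic approximants are simultaneously continuous and bounded, jointly in $x$ and in the measure argument $\probP$, so that the $\probP$-integral of a function that itself depends on $\probP$ is computable, and not merely semicomputable, uniformly in $e$. I would first try to handle this by working with Lipschitz hat functions on a product metric, including the L\'evy--Prokhorov metric on $\mathfrak{M}(\mathcal{X})$. For such functions, computability of $\int g(x\mid\probP,y)\,\probP({\rm d}x)$ follows by bounding the error of replacing $\probP$ by an $\varepsilon$-approximation in both slots by $O(\varepsilon)$ times the Lipschitz constant plus the sup norm.
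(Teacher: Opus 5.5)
Your proposal is correct and is essentially the standard trimming argument behind the G\'acs result, which the paper cites without proof and itself sketches in the proof of Lemma~\ref{lem:semicomp-test-discrete}: keep each increasing approximant $g_n$ only on the effectively open set $\{I_n<b\}$, note that the admitted indices form an initial segment, and finish by monotone convergence, which under the strict hypothesis even gives exact equality. The one slip is your example of a basic function: $r\cdot\max(0,1-\rho(u,B_m)/\delta)$ lies above $r\,\mathbb{I}_{B_m}$, not below it, so the supremum would overshoot $\phi_e$; instead use $r\cdot\min(1,\rho(u,\text{complement of }B_m)/\delta)$, or enumerate only hats $r\cdot\max(0,1-\rho(u,c)/\delta)$ whose support ball $B(c,\delta)$ is formally contained in some $B_m$ with $(m,n)\in W$ and $r\le r_n$.
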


\end{document}